\documentclass[sort&compress]{elsarticle}

\usepackage{amsmath,amssymb,amsthm,color,paralist,graphics,epsfig,graphicx,epstopdf}

\usepackage{hyperref}

\newcounter{count}

\newtheorem{theorem}{Theorem}[section]
\newtheorem{assumption}[theorem]{Assumption}
\newtheorem{corollary}[theorem]{Corollary}
\newtheorem{definition}[theorem]{Definition}

\newtheorem{remark}[theorem]{Remark}


\bibliographystyle{elsarticle-num}

\begin{document}

\begin{frontmatter}

\title{Open Quasispecies Models: Stability, Optimization, and Distributed Extension}

\author[address_1]{Ivan~Yegorov\corref{mycorrespondingauthor}\fnref{myfootnote}}
\cortext[mycorrespondingauthor]{Corresponding author}
\ead{ivanyegorov@gmail.com}
\fntext[myfootnote]{Also known as Ivan Egorov}

\author[address_1]{Artem~S.~Novozhilov}
\ead{artem.novozhilov@ndsu.edu}

\author[address_2,address_3]{Alexander~S.~Bratus}
\ead{alexander.bratus@yandex.ru}

\address[address_1]{North Dakota State University, 1210 Albrecht Boulevard, Fargo, ND 58102, USA \medskip}

\address[address_2]{Lomonosov Moscow State University, Leninskie Gory, MSU, 2nd educational building, Moscow, 119991, Russia \medskip}

\address[address_3]{Russian University of Transport, Obraztsova 15, Moscow 127994, Russia \medskip}

\begin{abstract}
We suggest a natural approach that leads to a modification of classical quasispecies models and incorporates the possibility of
population extinction in addition to growth. The resulting modified models are called open. Their essential properties, regarding
in particular equilibrium behavior, are investigated both analytically and numerically. The hallmarks of the quasispecies dynamics, viz.
the heterogeneous quasispecies distribution itself and the error threshold phenomenon, can be observed in our models, along with extinction.
In order to demonstrate the flexibility of the introduced framework, we study the inverse problem of fitness allocation under the biologically
motivated criterion of steady-state fitness maximization. Having in mind the complexity of numerical investigation of high-dimensional
quasispecies problems and the fact that the actual number of genotypes or alleles involved in a studied process can be extremely large,
we also build continuous-time distributed open quasispecies models. The obtained results may serve as an initial step to developing
mathematical models that involve directed therapy against various pathogens.
\end{abstract}

\begin{keyword}
quasispecies, Eigen model, Crow--Kimura model, open model, population extinction, quasispecies distribution, error threshold,
stability, fitness maximization, distributed model.
\end{keyword}

\end{frontmatter}


\section{Introduction}

Mathematical modeling has long been a key theoretical tool connecting various pictorial and verbal models of RNA virus evolution,
establishment, and extinction, as well as in vitro and in vivo experiments aimed to understand and potentially fight this rich
group of human pathogens~\cite{domingo2015}. Arguably, the onset of this specific modeling, which was done originally in terms of
self-replicating polynucleotide sequences to study the problem of the origin of life, can be traced back to the founding papers by
M.~Eigen~et~al.~\cite{eigen1971sma,Eigen1988,Eigen1989,Eigen1977,eigenshuster}, as well as by J.\,F.~Crow and
M.~Kimura~\cite{CrowKimura1970,CrowKimura1964,Kimura1965}. They proposed so-called \textit{quasispecies models}, which were later
connected directly to RNA virus evolution~\cite{domingo1978nucleotide}; see \cite{andino2015viral,Domingo2012} for a more recent account.

For such quasispecies models, two key phenomena were observed. The first one was the highly heterogeneous cloud of mutants of the most fit
(or master) sequence at the selection-mutation equilibrium. This cloud was called the quasispecies, hence the name of the model. The second
phenomenon was the so-called \textit{error threshold}. It can be described as the critical mutation rate (or probability, depending on
the settings of a particular model) that divides the selective phase of the virus evolution, i.\,e., the dominance of the master sequence in
the population, from the random phase, where the distribution of various genotypes becomes essentially uniform.

A great deal of mathematical investigation was devoted to study exact properties of the quasispecies and error threshold; see, e.\,g.,
the review in \cite{BratusNovozhilovSemenov2017}. In a nutshell, the exact details of the structure of the mutation-selection equilibrium and
the precise position (if it exists at all) of the error threshold depend in a subtle way on the implemented fitness landscape. They are
in general far from the oversimplified and widely referred formula that the error threshold is typically given by the selective advantage of
the master sequence over the sequence length; see, e.\,g.,~\cite{jainkrug2007,SemenovNovozhilov2015,wiehe1997model} for specific examples.

From a more practical point of view, the error threshold concept led to the idea of lethal mutagenesis, i.\,e., the process of virus extinction
induced by elevated mutation rates~\cite{eigen2002error}, and eventually to first mutagenetic experiments~\cite{cameron2001mechanism}.
The undeniable influence of the error threshold phenomena on the idea of lethal mutagenesis notwithstanding, there existed an internal
contradiction between these two concepts from the modeling point of view. The fact is that the quasispecies models were formulated as systems of
ordinary differential equations (ODEs) describing the distribution of the relative frequencies of the master sequence and its various mutants,
but not their population sizes (or densities with respect to a certain external measure). It was therefore simply meaningless to discuss
possible extinction within the framework of these models~\cite{bull2007theory,wilke2005quasispecies}. In~\cite{bull2007theory}, it was reasonably
noticed that the process of the loss of the master sequence in the population (the error threshold) should in general be distinguished from
the actual extinction of a whole virus population, and the original quasispecies models would have to be modified to incorporate
the possibility of population extinction. The paper~\cite{bull2007theory} was the first to suggest such a modification. Under a number of
simplifying assumptions (e.\,g., the multiplicativity of the fitness landscape, no back or compensatory mutations, etc.), the following simple
mathematical condition for population extinction was obtained: the product of the absolute fitness of the master sequence and the average
biological fitness should be less than one; see also~\cite{tejero2015theories}.

Other examples of quasispecies models including the possibilities of lethal mutations and population extinction were given in
\cite{chen2009lethal,martin2010lethal,tejero2010effect}. They were reviewed and put in a general context in \cite{tejero2015theories}.
Among other things, the analysis (analytical and/or numerical) of these models showed, quite naturally, that taking into account one or
another more realistic assumption would lead to corrections of the original extinction criterion obtained in \cite{bull2007theory}.
Moreover, analytical results are usually possible only under some significant simplifications regarding the fitness and mutational
landscapes.

The main goal of the present text is to develop a relatively simple but yet general and flexible modeling framework
incorporating growth and mortality characteristics, as well as to investigate its basic properties. Due to the possibility of population
extinction, our models can be called open, in contrast to the classical quasispecies models (formulated in terms of relative frequencies) which
can be called closed. Our approach differs from those presented in the aforementioned
works~\cite{bull2007theory,chen2009lethal,martin2010lethal,tejero2010effect,tejero2015theories}, since we do not start with any radical
simplifying assumptions, e.\,g., we do not fix the fitness or mutational landscapes. Our requirement is to stay connected with the classical
quasispecies models as close as reasonably possible. Among other things, such a requirement potentially yields the opportunity to support our
study with a number of existing mathematical techniques for the classical quasispecies models; see, e.\,g.,
\cite{Burger2000,cerf2016quasispecies,semenov2017generalized,semenov2016eigen,SemenovNovozhilov2015}.

In order to demonstrate the flexibility of our framework in case of ODE-based dynamics, we also study the inverse problem of fitness
allocation under the biologically motivated criterion of steady-state fitness maximization.

Furthermore, having in mind the complexity of numerical investigation of high-dimensional quasispecies problems and the fact that
the actual number of genotypes or alleles involved in a studied process can be extremely large, we propose continuous-time distributed open
quasispecies models with growth and mortality. Note that the first particular distributed version of a classical quasispecies model was
in fact introduced in \cite{CrowKimura1964,Kimura1965}. It was formulated as an integro-differential equation with respect to the sought-after
time-varying probability density describing the relative frequencies of considered genotypes. A rigorous mathematical investigation of
a class of such distributed models was provided in \cite{Burger1986,Burger1988a,Burger1988b,Burger1996,Burger2000}. The dynamics in our
distributed formulation is not restricted to the probability density constraint and takes growth and mortality characteristics into account.

Despite the mostly mathematical content of the current work, it has to be emphasized that the general methodology leading to our
approach can be applied to a variety of different biological systems, including not only viruses but also bacteria, cancer cells, etc.
Instead of descriptions of particular genotypes, more aggregated quantities may in principle be treated in quasispecies models.
In particular, we keep in mind future applications of the proposed framework to developing mathematical models that involve directed
therapy against various pathogens, such as infections or cancer~\cite{KomarovaWodarz2014,WodarzKomarova2014,SchattlerLedzewicz2015},
which will be considered elsewhere.

The rest of the paper is organized as follows. In Section~\ref{Sec_2}, we construct a class of ODE-based open quasispecies models and establish
its essential properties. Section~\ref{Sec_3} investigates the corresponding steady-state fitness maximization problem. Sections~\ref{Sec_2} and
\ref{Sec_3} also contain some related numerical simulation results. Section~\ref{Sec_4} is devoted to a class of continuous-time distributed
open quasispecies models. A summarizing discussion of possible future developments is given in Section~\ref{Sec_5}.

\section{ODE-based open quasispecies models}
\label{Sec_2}

\subsection{Classical Eigen and Crow--Kimura quasispecies models}

To make the text self-containing, we start with formulating the classical ODE-based quasispecies models.

The classical \textit{Eigen quasispecies model} \cite{Eigen1988,Eigen1989} is formulated via the system of ordinary differential
equations
\begin{equation}
\dot{p}(t) \:\, = \:\, K \, M \, p(t) \: - \: f[p(t)] \, p(t), \quad t \geqslant 0,  \label{Eq_1}
\end{equation}
where the following notations are used:
\begin{itemize}
\item  $ t \geqslant 0 $ is a time variable;
\item  $ p(t) \: = \: (p_1(t), \, p_2(t), \, \ldots, \, p_n(t))^{\top} \: \in \: [0, +\infty)^{n \times 1} \: $ is the vector of the relative
(normalized) time-varying frequencies of $ n $ considered genotypes (or allelic effects) labeled by $ 1, 2, \ldots, n $,
$ \: \sum_{i = 1}^n p_i(t) \, = \, 1 \: $ for all $ t \geqslant 0 $;
\item  $ K \, = \, \{ k_{ij} \}_{i,j \, = \, 1}^n \, \in \, [0, 1]^{n \times n} \: $ is the mutation matrix, where $ k_{ij} $ is
the probability of the reproduction event for which an individual of genotype~$ j $ produces an individual of genotype~$ i $, so that
\begin{equation}
\sum_{i = 1}^n k_{ij} \: = \: 1, \quad j = \overline{1, n};  \label{Eq_2}
\end{equation}
\item  $ M \: = \: \mathrm{diag} \, [m_1, m_2, \ldots, m_n] \: \in \: \mathbb{R}^{n \times n} \: $ is the diagonal matrix whose
diagonal elements form the fitness landscape describing the relationship between the genotypes and reproductive success;
\item  $ f[p(t)] \: = \: \sum_{i = 1}^n m_i \, p_i(t) \: $ is the mean population fitness.
\end{itemize}

The \textit{Crow--Kimura quasispecies model}~\cite{CrowKimura1970} relies on the assumptions that the birth events are error-free
and the mutations occur during the life time of the studied structures, i.\,e., the birth events and mutations are separated
on the time scale. The corresponding dynamical system is written as
\begin{equation}
\dot{p}(t) \:\, = \:\, \left( M + \hat{M} \right) \, p(t) \: - \: f[p(t)] \, p(t), \quad t \geqslant 0,  \label{Eq_3}
\end{equation}
where $ \: \hat{M} \, = \, \{ \mu_{ij} \}_{i,j \, = \, 1}^n \, \in \, [0, +\infty)^{n \times n} $, $ \: \mu_{ij} $ denotes
the rate of mutation of genotype~$ j $ into genotype~$ i $,
\begin{equation}
\mu_{jj} \:\, = \:\, -\sum_{i \: \in \: \{ 1, 2, \ldots, n \} \, \setminus \, \{ j \}} \mu_{ij},  \label{Eq_4}
\end{equation}
and other notations are as in (\ref{Eq_1}). The permutation invariant Crow--Kimura
model~\cite{BratusNovozhilovSemenov2014,SemenovNovozhilov2015} considers $ N + 1 $~classes of genotypes and is obtained from
(\ref{Eq_3}) by replacing $ n $ with $ N + 1 $ and introducing the tridiagonal matrix
\begin{equation}
\begin{aligned}
& \hat{M} \, = \, \mu \, Q, \quad \mu \, > \, 0, \quad Q \, \in \, \mathbb{R}^{(N + 1) \, \times \, (N + 1)}, \\
& Q \:\: = \:\: \begin{pmatrix}
-N & 1 & 0 & 0 & \ldots & \ldots & 0 & 0 \\
N & -N & 2 & 0 & \ldots & \ldots & 0 & 0 \\
0 & N - 1 & -N & 3 & \ldots & \ldots & 0 & 0 \\
0 & 0 & N - 2 & - N & \ldots & \ldots & 0 & 0 \\
\ldots & \ldots & \ldots & \ldots & \ldots & \ldots & \ldots & \ldots \\
0 & 0 & \ldots & \ldots & \ldots & -N & N - 1 & 0 \\
0 & 0 & \ldots & \ldots & \ldots & 2 & -N & N \\
0 & 0 & \ldots & \ldots & \ldots & 0 & 1 & -N
\end{pmatrix}.
\end{aligned}  \label{Eq_5}
\end{equation}

By using the properties~(\ref{Eq_2}),~(\ref{Eq_4}) as well as the definitions of $ M $ and $ f $, one can easily verify that,
for a solution of the system~(\ref{Eq_1}) or (\ref{Eq_3}), the condition $ \: \sum_{i = 1}^n p_i(0) \, = \, 1 \: $ implies
$ \: \sum_{i = 1}^n p_i(t) \, = \, 1 \: $ for all $ t > 0 $.

The models~(\ref{Eq_1}) and (\ref{Eq_3}) are stated for closed (isolated) systems of populations, such that their interactions with
the environment and surrounding populations of other kinds are negligible with respect to the studied processes. The death rates are
either absent or implicitly incorporated in the fitness landscape. Since the state variables are interpreted as relative frequencies
and the state trajectories do not leave the standard simplex, growth and mortality are essentially not considered.

\subsection{Open quasispecies models}
\label{Subsect_2_2}

From the perspective of mathematical modeling of therapy processes against infections or
cancer~\cite{WodarzKomarova2014,KomarovaWodarz2014,SchattlerLedzewicz2015}, it is relevant to investigate how various quasispecies
related to certain pathogens or diseased cells reproduce, mutate, die, and resist to targeted attacks from therapeutic agents.
For that purpose, it is reasonable to build open quasispecies models whose state variables are interpreted not as relative
frequencies, but as densities with respect to a certain external measure, so that their sum does not have to be invariant, and
one can take growth and mortality properties into account. Before developing complicated models that involve the dynamics of
a therapeutic agent and its influence on the dynamics of studied quasispecies, it is useful first to treat the simpler case
when the death rates of the quasispecies are constant and the growth terms include only constant coefficients and a saturation
factor depending on the total population size. This corresponds to the situation when the concentration of a therapeutic agent
is assumed to be nearly constant within the observed time interval and can therefore be excluded from the state variables.
Our work focuses namely on such kinds of open quasispecies models with growth and mortality (see also the discussion in the introduction).

Let us write both systems~(\ref{Eq_1}) and (\ref{Eq_3}) in the common form
\begin{equation}
\dot{p}(t) \:\, = \:\, G \, p(t) \: - \: f[p(t)] \, p(t), \quad t \geqslant 0,  \label{Eq_6}
\end{equation}
with $ \: G \, = \, \{ g_{ij} \}_{i,j \, = \, 1}^n \, \in \, \mathbb{R}^{n \times n} \: $ satisfying
\begin{equation}
\sum_{i = 1}^n g_{ij} \: = \: m_j, \quad j = \overline{1, n},  \label{Eq_7}
\end{equation}
\begin{equation}
g_{ij} \geqslant 0 \:\:\: \mathrm{for} \:\:\: i \neq j, \quad i = \overline{1, n}, \quad j = \overline{1, n}.  \label{Eq_8}
\end{equation}
One has $ \, G = K M \, $ for (\ref{Eq_1}) and $ \, G = M + \hat{M} \, $ for (\ref{Eq_3}).

Note that a solution of (\ref{Eq_6}) with $ \: \sum_{i = 1}^n p_i(0) \, = \, 1 \: $ can be represented as
$$
p(t) \:\, = \:\, \frac{1}{\sum_{i = 1}^n r_i(t)} \: r(t) \quad \forall t \geqslant 0,
$$
where
$$
\begin{aligned}
& r(t) \:\, = \:\, (r_1(t), \, r_2(t), \, \ldots, \, r_n(t))^{\top} \:\, = \:\,
c \: \exp \left\{ \int\limits_0^t f[p(\tau)] \, \mathrm{d} \tau \right\} \, p(t) \\
& \forall t \geqslant 0
\end{aligned}
$$
is the unique solution of
\begin{equation}
\dot{r}(t) \: = \: G \, r(t), \quad t \geqslant 0,  \label{Eq_9}
\end{equation}
$$
r(0) \: = \: c \, p(0),
$$
with an arbitrary constant $ \: c \, \in \, \mathbb{R} \setminus \{ 0 \}, \: $ i.\,e., the components of $ r $ may be
understood as some size or density variables whose normalization leads to the relative frequencies constituting $ p $
(a similar fact was originally noticed in \cite{thompson1974eigen}).

We hence proceed from the system~(\ref{Eq_9}) and incorporate growth and mortality characteristics in it. The following
notations are adopted:
\begin{itemize}
\item  $ t \geqslant 0 $ is a time variable;
\item  $ u(t) \: = \: (u_1(t), \, u_2(t), \, \ldots, \, u_n(t))^{\top} \: \in \: [0, +\infty)^{n \times 1} \: $ is
the vector of the densities of $ n $ considered genotypes (or allelic effects) with respect to a certain external measure;
\item  $ m \: = \: (m_1, m_2, \ldots, m_n)^{\top} \: \in \: \mathbb{R}^{n \times 1} \: $ is the corresponding
fitness landscape, $ \: M \: = \: \mathrm{diag} \,\, m \: \in \: \mathbb{R}^{n \times n} $;
\item  $ d \: = \: (d_1, d_2, \ldots, d_n)^{\top} \: \in \: [0, +\infty)^{n \times 1} \: $ is the vector of
the corresponding death rates, $ \: D \: = \: \mathrm{diag} \,\, d \: \in \: [0, +\infty)^{n \times n} $;
\item  the growth and mutation properties are described by a matrix~$ G \in \mathbb{R}^{n \times n} $ satisfying
(\ref{Eq_7}),~(\ref{Eq_8}) as well as by a growth saturation term $ \, \varphi \left( \sum_{i = 1}^n u_i(t) \right) \, $
depending on the total population size $ \, \sum_{i = 1}^n u_i(t), \, $ where
$ \: \varphi \colon \, \mathbb{R} \to [0, +\infty) \: $ is an appropriate function.
\end{itemize}
Thus, we arrive at the autonomous system
\begin{equation}
\dot{u}(t) \:\: = \:\: \varphi \left( \sum_{i = 1}^n u_i(t) \right) \, G \, u(t) \,\, - \,\, D \, u(t), \quad t \geqslant 0.
\label{Eq_10}
\end{equation}

A similar way of introducing a growth saturation term was implemented in \cite{BratusLukasheva2009,Pavlovich2012} as applied to
open replicator systems.

The origin~$ u = 0 $ is a trivial steady state of (\ref{Eq_10}). Set the initial condition for (\ref{Eq_10}) as
\begin{equation}
u(0) \: = \: u^0 \: = \: \left( u^0_1, u^0_2, \ldots, u^0_n \right)^{\top} \: \in \: [0, +\infty)^{n \times 1}.
\label{Eq_11}
\end{equation}

By $ \: F \: = \: (F_1, F_2, \ldots, F_n)^{\top} \, \colon \: \mathbb{R}^{n \times 1} \to \mathbb{R}^{n \times 1}, \: $
denote the function in the right-hand side of (\ref{Eq_10}):
\begin{equation}
\begin{aligned}
& F_i(u) \:\: = \:\: \varphi \left( \sum_{j = 1}^n u_j \right) \, \sum_{j = 1}^n g_{ij} u_j \,\, - \,\, d_i u_i \\
& \forall \: u \: = \: (u_1, u_2, \ldots, u_n)^{\top} \: \in \: \mathbb{R}^{n \times 1}, \quad i = \overline{1, n}.
\end{aligned}  \label{Eq_12}
\end{equation}
If $ \varphi $ is differentiable, the related Jacobian matrix takes the form
\begin{equation}
\begin{aligned}
& \mathrm{D} F(u) \: = \: \left\{ \frac{\partial F_i(u)}{\partial u_j} \right\}_{i, j \, = \, 1}^n \, , \\
& \frac{\partial F_i(u)}{\partial u_j} \:\: = \:\: \begin{cases}
\varphi' \left( \sum\limits_{\nu = 1}^n u_{\nu} \right) \, \sum\limits_{\nu = 1}^n g_{i \nu} u_{\nu} \:\, + \:\,
\varphi \left( \sum\limits_{\nu = 1}^n u_{\nu} \right) \, g_{ii} \:\, - \:\, d_i, & i = j, \\
\varphi' \left( \sum\limits_{\nu = 1}^n u_{\nu} \right) \, \sum\limits_{\nu = 1}^n g_{i \nu} u_{\nu} \:\, + \:\,
\varphi \left( \sum\limits_{\nu = 1}^n u_{\nu} \right) \, g_{ij}, & i \neq j,
\end{cases} \\
& \forall \: u \: = \: (u_1, u_2, \ldots, u_n)^{\top} \: \in \: \mathbb{R}^{n \times 1}, \quad i = \overline{1, n}, \quad
j = \overline{1, n}.
\end{aligned}  \label{Eq_13}
\end{equation}

\begin{assumption}  \label{Ass_1}
$ m ${\rm ,} $ M ${\rm ,} $ d ${\rm ,} $ D ${\rm ,} and $ G $ are as given above in this subsection
{\rm ((\ref{Eq_7})} and {\rm (\ref{Eq_8})} hold in particular{\rm )}. $ \varphi \colon \, \mathbb{R} \to [0, +\infty) \: $
is a nonnegative continuously differentiable function such that the functions
$ \: [0, +\infty) \, \ni \, s \: \longmapsto \: s \, \varphi(s) \: $ and
$ \: [0, +\infty) \, \ni \, s \: \longmapsto \: s \, \varphi'(s) \: $ are bounded.
\end{assumption}

\begin{theorem}  \label{Thm_2}
Let Assumption~{\rm \ref{Ass_1}} hold. Then the nonnegative orthant $ [0, +\infty)^{n \times 1} $ is positively invariant
with respect to the system~{\rm (\ref{Eq_10}) (}i.\,e.{\rm ,} any state trajectory of {\rm (\ref{Eq_10})} starting in
this orthant stays there further up to the right end of the largest time interval of definition{\rm )}. Moreover{\rm ,}
for any initial state~{\rm (\ref{Eq_11}),} there exists a unique solution of {\rm (\ref{Eq_10})} that is defined for
all~$ t \geqslant 0 $.
\end{theorem}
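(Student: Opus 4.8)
The plan is to combine local existence and uniqueness from standard ODE theory with a flow--invariance argument on the boundary of the orthant and an a priori size bound that excludes finite--time blow--up. Since $\varphi$ is continuously differentiable, the right--hand side $F$ in \eqref{Eq_12} is of class $C^{1}$ on $\mathbb{R}^{n\times 1}$, so the Picard--Lindel\"of theorem yields, for the initial state \eqref{Eq_11}, a unique solution $u(\cdot)$ on a maximal interval $[0,T_{\max})$ with $T_{\max}\in(0,+\infty]$. It then suffices to show that (i) $u(t)\in[0,+\infty)^{n\times 1}$ for every $t\in[0,T_{\max})$ and (ii) $T_{\max}=+\infty$; part~(i), applied to an arbitrary initial state in the orthant, also yields the asserted positive invariance.

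For (i) the crucial observation is that the vector field points into the orthant along each of its faces: if $u\in[0,+\infty)^{n\times 1}$ and $u_{i}=0$, then, by \eqref{Eq_8} and $\varphi\geqslant 0$,
\[
F_{i}(u)\;=\;\varphi\!\left(\sum_{j=1}^{n}u_{j}\right)\sum_{j\neq i}g_{ij}u_{j}\;\geqslant\;0 .
\]
This subtangentiality condition already gives positive invariance of $[0,+\infty)^{n\times 1}$ by Nagumo's invariance theorem, but it is also easy to turn into an elementary argument by perturbation. For $\varepsilon\in(0,1]$ let $w^{\varepsilon}(\cdot)$ solve $\dot w=F(w)+\varepsilon\mathbf 1$ with $w^{\varepsilon}(0)=u^{0}+\varepsilon\mathbf 1$, where $\mathbf 1=(1,\dots,1)^{\top}$, so that $w^{\varepsilon}(0)$ lies in the interior $(0,+\infty)^{n\times 1}$. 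If, for some $T<T_{\max}$, a component of $w^{\varepsilon}$ attained the value $0$ on $[0,T]$, then at the first such instant $t_{1}$ one would have $w^{\varepsilon}(t_{1})\in[0,+\infty)^{n\times 1}$, $w^{\varepsilon}_{i_{0}}(t_{1})=0$ and $\dot w^{\varepsilon}_{i_{0}}(t_{1})\leqslant 0$ for some $i_{0}$; but the displayed computation applied at $w^{\varepsilon}(t_{1})$ gives $\dot w^{\varepsilon}_{i_{0}}(t_{1})=F_{i_{0}}\bigl(w^{\varepsilon}(t_{1})\bigr)+\varepsilon\geqslant\varepsilon>0$, a contradiction. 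Hence $w^{\varepsilon}$ stays in the interior of the orthant as long as it is defined; since, by continuous dependence on $\varepsilon$ and on the initial data, $w^{\varepsilon}$ exists on $[0,T]$ for small $\varepsilon$ and converges to $u$ uniformly on $[0,T]$ as $\varepsilon\to 0$, we get $u(t)\in[0,+\infty)^{n\times 1}$ on $[0,T]$, and (i) follows because $T<T_{\max}$ is arbitrary.

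For (ii) I would bound the total population size $S(t):=\sum_{i=1}^{n}u_{i}(t)$. Summing the components of \eqref{Eq_10} and using \eqref{Eq_7},
\[
\dot S(t)\;=\;\varphi\!\bigl(S(t)\bigr)\sum_{j=1}^{n}m_{j}u_{j}(t)\;-\;\sum_{j=1}^{n}d_{j}u_{j}(t)\;\leqslant\;\max\Bigl\{0,\,\max_{1\leqslant j\leqslant n}m_{j}\Bigr\}\,\sup_{s\geqslant 0}\bigl(s\,\varphi(s)\bigr),
\]
where I used $u_{j}(t)\geqslant 0$ (from (i)), $d_{j}\geqslant 0$, $S(t)\geqslant 0$, $\varphi\geqslant 0$, and the boundedness of $s\mapsto s\,\varphi(s)$ from Assumption~\ref{Ass_1}. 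Thus $S$ grows at most linearly in $t$, and since $0\leqslant u_{i}(t)\leqslant S(t)$ the whole trajectory stays bounded on every bounded subinterval of $[0,T_{\max})$; by the standard continuation criterion this rules out $T_{\max}<+\infty$, so $T_{\max}=+\infty$.

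The only step requiring real care is the invariance step (i): the bare sign condition $F_{i}\geqslant 0$ on $\{u_{i}=0\}$ does not in itself prevent a trajectory from escaping through a face tangentially, and it is the $\varepsilon\mathbf 1$--perturbation (equivalently, Nagumo's subtangentiality theorem or a suitable comparison argument) that upgrades it to genuine invariance. All remaining ingredients --- the $C^{1}$ regularity of $F$, the a priori bound, and the continuation criterion --- are routine. Note that the second boundedness hypothesis in Assumption~\ref{Ass_1}, on $s\mapsto s\,\varphi'(s)$, is not needed here; it serves the Jacobian estimates \eqref{Eq_13} used elsewhere in the paper.
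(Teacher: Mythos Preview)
Your proof is correct, but both halves follow a different route from the paper.

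For positive invariance, the paper does not invoke Nagumo's criterion or a perturbation; instead it multiplies the $i$-th equation by the integrating factor $\exp\bigl\{-g_{ii}\!\int_{0}^{t}\varphi(S(\tau))\,\mathrm d\tau+d_{i}t\bigr\}$, obtaining
\[
\frac{\mathrm d}{\mathrm dt}\!\left(u_{i}(t)\,e^{-g_{ii}\!\int_{0}^{t}\varphi(S)\,\mathrm d\tau+d_{i}t}\right)
\;=\;e^{\{\cdots\}}\,\varphi(S(t))\sum_{j\neq i}g_{ij}u_{j}(t)\;\geqslant\;0,
\]
so that the bracketed quantity is nondecreasing and hence $u_{i}(t)\geqslant 0$. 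This avoids the $\varepsilon$-shift and continuous--dependence step; your approach, on the other hand, is the standard Nagumo argument and generalizes verbatim to vector fields that do not split so cleanly into a diagonal part and an off-diagonal nonnegative part.

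For global existence, the paper argues differently again: it observes that, thanks to \emph{both} boundedness hypotheses in Assumption~\ref{Ass_1} (on $s\varphi(s)$ \emph{and} on $s\varphi'(s)$), the Jacobian entries~\eqref{Eq_13} are uniformly bounded on the orthant, whence $F$ is globally Lipschitz there and the solution extends to all $t\geqslant 0$. Your a~priori bound on $S(t)$ needs only the first hypothesis, so your closing remark is right in spirit but not in detail: the paper uses the boundedness of $s\varphi'(s)$ precisely \emph{here}, not merely ``elsewhere''; what you have shown is that one can dispense with it for this theorem.
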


Theorem~\ref{Thm_2} is proved in Appendix.

\begin{remark}  \rm  \label{Rem_3}
A particular admissible choice of the growth saturation function~$ \varphi $ is
\begin{equation}
\varphi(s) \, = \, e^{-\gamma s} \quad \forall s \in \mathbb{R}, \qquad \gamma \, = \, \mathrm{const} \, > \, 0.
\label{Eq_14}
\end{equation}
The case when $ \: \varphi(s) \, = \, \alpha \, e^{-\gamma s} \: $ for all $ s \in \mathbb{R} $ and $ \alpha, \gamma $
are positive constants is trivially reduced to (\ref{Eq_14}) by substituting $ \: \varphi / \alpha, \, \alpha m, \, \alpha G \: $
with $ \, \varphi, m, G, \, $ respectively.  \qed
\end{remark}

The next assumption is needed in particular to ensure the invertibility of the mortality matrix~$ D $ as well as the boundedness
of the total population size function
\begin{equation}
s(t) \: = \: \sum_{i = 1}^n u_i(t) \quad \forall t \geqslant 0  \label{Eq_15}
\end{equation}
along the solution of (\ref{Eq_10}),~(\ref{Eq_11}).

\begin{assumption}  \label{Ass_4}
$ d_{\min} \: = \: \min_{i \, = \, \overline{1, n}} \, d_i \: > \: 0 $.
\end{assumption}

Assumption~\ref{Ass_4} yields that
\begin{equation}
D^{-1} \:\, = \:\, \mathrm{diag} \, \left[ d_1^{-1}, \, d_2^{-1}, \, \ldots, \, d_n^{-1} \right].  \label{Eq_16}
\end{equation}

\begin{theorem}  \label{Thm_5}
Under Assumptions~{\rm \ref{Ass_1}} and {\rm \ref{Ass_4},} the total population size function~{\rm (\ref{Eq_15})} is bounded
along the solution of {\rm (\ref{Eq_10}),~(\ref{Eq_11})}.
\end{theorem}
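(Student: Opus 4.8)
The plan is to collapse the $ n $-dimensional system into a single scalar differential inequality for $ s(t) $ and then close the argument with an elementary comparison (Gr\"onwall-type) estimate. First I would sum the $ n $ scalar equations in~(\ref{Eq_10}). Using~(\ref{Eq_7}), i.\,e.\ $ \sum_{i=1}^n g_{ij} = m_j $, the double sum $ \sum_{i=1}^n \sum_{j=1}^n g_{ij} u_j(t) $ reduces to $ \sum_{j=1}^n m_j u_j(t) $, so that
\begin{equation*}
\dot{s}(t) \:=\: \varphi\bigl(s(t)\bigr) \sum_{j=1}^n m_j u_j(t) \:-\: \sum_{i=1}^n d_i u_i(t), \qquad t \geqslant 0 .
\end{equation*}
By Theorem~\ref{Thm_2} the solution of~(\ref{Eq_10}),~(\ref{Eq_11}) exists for all $ t \geqslant 0 $ and satisfies $ u_i(t) \geqslant 0 $; this nonnegativity is essential for the subsequent estimates, and it also guarantees $ s(t) \geqslant 0 $.

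Next I would bound the right-hand side. Put $ m_{\max} = \max_{i=\overline{1,n}} m_i $. From $ u_i(t) \geqslant 0 $ one gets $ \sum_{j=1}^n m_j u_j(t) \leqslant m_{\max}\, s(t) $ (and if $ m_{\max} \leqslant 0 $ the growth term is already nonpositive), while Assumption~\ref{Ass_4} gives $ \sum_{i=1}^n d_i u_i(t) \geqslant d_{\min}\, s(t) $ with $ d_{\min} > 0 $. Since $ \varphi \geqslant 0 $ (so that $ s\,\varphi(s) \geqslant 0 $ for $ s \geqslant 0 $), this yields
\begin{equation*}
\dot{s}(t) \:\leqslant\: m_{\max}\, s(t)\,\varphi\bigl(s(t)\bigr) \:-\: d_{\min}\, s(t), \qquad t \geqslant 0 .
\end{equation*}
Here comes the crucial use of Assumption~\ref{Ass_1}: the map $ [0,+\infty) \ni s \mapsto s\,\varphi(s) $ is bounded, so setting $ C := \sup_{s \geqslant 0} s\,\varphi(s) \in [0,+\infty) $ and $ B := \max(0, m_{\max})\, C \geqslant 0 $ we obtain the affine differential inequality $ \dot{s}(t) \leqslant B - d_{\min}\, s(t) $.

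Finally I would integrate this inequality via the integrating factor $ e^{d_{\min} t} $: from $ \tfrac{d}{dt}\bigl(e^{d_{\min} t} s(t)\bigr) \leqslant B\, e^{d_{\min} t} $ one gets, after integrating over $ [0,t] $,
\begin{equation*}
s(t) \:\leqslant\: s(0)\, e^{-d_{\min} t} \:+\: \frac{B}{d_{\min}}\bigl(1 - e^{-d_{\min} t}\bigr) \:\leqslant\: \max\!\left( s(0),\, \frac{B}{d_{\min}} \right), \qquad t \geqslant 0,
\end{equation*}
which is the desired uniform bound. I do not expect a genuine obstacle; the points that require care are that Theorem~\ref{Thm_2} must be invoked both for global existence of the solution and for its nonnegativity (without $ u_i(t) \geqslant 0 $ the bound $ \sum_j m_j u_j \leqslant m_{\max} s $ fails), and that boundedness of $ \varphi $ alone would \emph{not} suffice — one genuinely needs $ s\,\varphi(s) $ bounded (Assumption~\ref{Ass_1}) together with $ d_{\min} > 0 $ (Assumption~\ref{Ass_4}), so that the dissipative term $ -d_{\min}\, s $ dominates the growth term for large $ s $. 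The boundedness of $ s\,\varphi'(s) $ postulated in Assumption~\ref{Ass_1} is not used for this particular statement.
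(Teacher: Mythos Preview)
Your proof is correct and follows essentially the same route as the paper's: sum the equations using~(\ref{Eq_7}), invoke Theorem~\ref{Thm_2} for nonnegativity, bound the growth term via $\sup_{s\geqslant 0} s\,\varphi(s)$ from Assumption~\ref{Ass_1}, and conclude from the resulting affine differential inequality $\dot s \leqslant \mathrm{const} - d_{\min}\,s$. The paper writes the constant as $c\,|m_{\max}|$ rather than your (slightly sharper) $\max(0,m_{\max})\,C$ and omits the explicit Gr\"onwall integration, but the argument is otherwise identical.
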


Theorem~\ref{Thm_5} is proved in Appendix.

\subsection{Steady-state analysis}
\label{Subsect_2_3}

In this subsection, we characterize a nontrivial steady state of the system~(\ref{Eq_10}) with growth and mortality under
some new conditions in addition to Assumptions~\ref{Ass_1} and \ref{Ass_4}.

\begin{assumption}  \label{Ass_6}
The function~$ \varphi $ is positive and strictly decreasing on $ [0, +\infty) ${\rm ,} and
$ \: l_{\varphi} \: = \: \lim_{s \, \to \, +\infty} \, \varphi(s) \: \geqslant \: 0 $.
\end{assumption}

\begin{remark}  \rm  \label{Rem_7}
Assumption~\ref{Ass_6} implies the existence of the strictly decreasing inverse function
$ \: \varphi^{-1} \, \colon \: (l_{\varphi}, \varphi(0)] \, \to \, [0, +\infty) $. If $ \varphi $ takes
the form~(\ref{Eq_14}), one has
$$
\varphi(0) \, = \, 1, \quad l_{\varphi} \, = \, 0, \quad
\varphi^{-1}(\rho) \,\, = \,\, \frac{1}{\gamma} \, \ln \, \frac{1}{\rho} \,\, \geqslant \,\, 0 \quad
\forall \rho \in (0, 1],
$$
and Assumption~\ref{Ass_6} is fulfilled.  \qed
\end{remark}

Recall also that the diagonal mortality matrix~$ D $ admits the diagonal inverse~(\ref{Eq_16}) in line with
Assumption~\ref{Ass_4}.

A nontrivial steady state $ \: u^* \: = \: (u^*_1, u^*_2, \ldots, u^*_n)^{\top} \: \in \: [0, +\infty)^{n \times 1} \: $
of the system~(\ref{Eq_10}) satisfies
\begin{equation}
D^{-1} G u^* \: = \: \frac{1}{\varphi(s^*)} \, u^*, \quad s^* \: = \: \sum_{i = 1}^n u^*_i \: > \: 0.  \label{Eq_17}
\end{equation}
We therefore arrive at the problem of finding a positive real eigenvalue~$ \lambda^* $ of the matrix~$ D^{-1} G $
with a related eigenvector $ \: u^* \: = \: (u^*_1, u^*_2, \ldots, u^*_n)^{\top} \: \in \: [0, +\infty)^{n \times 1} \: $
such that
\begin{equation}
\varphi(s^*) \: = \: \frac{1}{\lambda^*} \: \in \: (l_{\varphi}, \varphi(0)), \quad
s^* \: = \: \sum_{i = 1}^n u^*_i \: = \: \varphi^{-1} \left( \frac{1}{\lambda^*} \right) \: > \: 0.  \label{Eq_18}
\end{equation}
For such $ \lambda^* $ and $ u^* $, the relations $ \, G u^* = \lambda^* D u^* \, $ and (\ref{Eq_7}) lead to
\begin{equation}
\lambda^* \: = \: \frac{\sum_{i = 1}^n m_i u^*_i}{\sum_{i = 1}^n d_i u^*_i} \, .  \label{Eq_19}
\end{equation}

In the presence of mortality, it is reasonable to update the fitness definition for taking into account that an increase
in the death rates negatively affects the population, as well as for maintaining a convenient representation of
the steady-state fitness in terms of an appropriate eigenvalue similarly to the classical quasispecies models
(for the latter, such a representation is discussed, e.\,g., in \cite[\S IV.3]{Burger2000},
\cite[\S 2]{BratusNovozhilovSemenov2014}, \cite[\S 1, \S 2]{SemenovNovozhilov2015}, and
\cite[\S 1]{BratusNovozhilovSemenov2017}).

\begin{definition}  \label{Def_8}
The population fitness for the modified quasispecies models described by the system~{\rm (\ref{Eq_10})} is given by
\begin{equation}
\begin{aligned}
& \tilde{f}[u] \:\: = \:\: \begin{cases}
0, & \sum_{i = 1}^n u_i \: = \: 0, \\
\dfrac{f[u]}{\sum_{i = 1}^n d_i u_i} \,\, = \,\, \dfrac{\sum_{i = 1}^n m_i u_i}{\sum_{i = 1}^n d_i u_i} \, , &
\sum_{i = 1}^n u_i \: > \: 0,
\end{cases} \\
& \forall \: u \: = \: (u_1, u_2, \ldots, u_n)^{\top} \: \in \: [0, +\infty)^{n \times 1}.
\end{aligned}  \label{Eq_20}
\end{equation}
\end{definition}

Hence, the relation~(\ref{Eq_19}) means that
\begin{equation}
\lambda^* \, = \, \tilde{f}[u^*].  \label{Eq_21}
\end{equation}

One more assumption is required for investigating the eigenvalue problem for the matrix~$ D^{-1} G $. The definitions and
properties of essentially nonnegative matrices, dominant eigenvalues, and irreducible matrices have to be recalled
(see, e.\,g., \cite[\S I.7]{Kato1982}).

\begin{assumption}  \label{Ass_9}
The matrix~$ G $ is irreducible.
\end{assumption}

\begin{remark}  \rm  \label{Rem_10}
The matrix~$ G $ is essentially nonnegative according to (\ref{Eq_8}). Assumptions~\ref{Ass_4}, \ref{Ass_9} and
the relation~(\ref{Eq_16}) imply that the matrix~$ D^{-1} G $ is also essentially nonnegative and irreducible
(see \cite[\S I.7.4]{Kato1982}).  \qed
\end{remark}

\begin{remark}  \rm  \label{Rem_11}
For the matrix~$ \, G = K M \, $ in the Eigen model~(\ref{Eq_1}), Assumption~\ref{Ass_9} holds if, e.\,g.,
$ \, m_i $, $ i = \overline{1, n}, \, $ and all elements of $ K $ are positive (this sufficient condition can be
relaxed in certain cases). For the matrix~$ \, G = M + \hat{M} \, $ in the Crow--Kimura model~(\ref{Eq_3}),
Assumption~\ref{Ass_9} holds if and only if $ \hat{M} $ is irreducible ($ \hat{M} $ is a~priori essentially
nonnegative, since $ M $ is diagonal and $ \, G = M + \hat{M} \, $ fulfills (\ref{Eq_8})). In case of
the permutation invariant Crow--Kimura model specified by (\ref{Eq_5}), $ \, \hat{M} = \mu Q \, $ is indeed
irreducible, because $ Q $ is irreducible. The latter property can be verified with the help of
\cite[\S I.7.4]{Kato1982} and \cite[Fact~2 in \S 9.2]{Hogben2007} (the digraph based criterion of irreducibility for
square matrices with nonnegative elements remains valid for essentially nonnegative matrices).  \qed
\end{remark}

\begin{theorem}  \label{Thm_12}
Under Assumptions~{\rm \ref{Ass_1}, \ref{Ass_4},} and {\rm \ref{Ass_9},} the following properties hold{\rm :}
\begin{itemize}
\item  the matrix~$ D^{-1} G $ has a dominant eigenvalue~$ \lambda^* ${\rm ,} which is real and greater than
the real part of any other eigenvalue of $ D^{-1} G ${\rm ;}
\item  the eigenvalue~$ \lambda^* $ is simple {\rm (}i.\,e.{\rm ,} has algebraic multiplicity one{\rm )} and
admits an eigenvector all of whose components are positive{\rm ;}
\item  there are no other eigenvalues of $ D^{-1} G $ admitting eigenvectors all of whose components are
nonnegative.
\end{itemize}
\end{theorem}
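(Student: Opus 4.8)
The plan is to reduce the statement to the classical Perron--Frobenius theory for essentially nonnegative irreducible matrices. By Remark~\ref{Rem_10}, the matrix $A = D^{-1} G$ is essentially nonnegative and irreducible. First I would pick a scalar $c > 0$ large enough that $c I + A$ has all entries nonnegative; since adding a multiple of the identity shifts every eigenvalue by $c$ without changing eigenvectors, it suffices to prove the analogue of all three bullet points for the genuinely nonnegative irreducible matrix $B = c I + A$. This is exactly the setting of the classical Perron--Frobenius theorem (see, e.g., \cite[\S I.7]{Kato1982} or standard references): $B$ has a real eigenvalue $\rho(B)$ equal to its spectral radius, strictly greater in absolute value (hence in real part, after noting any eigenvalue on the spectral circle would force $B$ to be a scaled permutation of a block form incompatible with irreducibility and strict positivity of the Perron eigenvector) than the real part of every other eigenvalue; this eigenvalue is algebraically simple; and it is the unique eigenvalue possessing a nonnegative eigenvector, which is in fact strictly positive. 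Translating back by $\lambda^* = \rho(B) - c$ yields all three claims for $A = D^{-1} G$.

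The key steps in order: (i) invoke Remark~\ref{Rem_10} to record that $D^{-1} G$ is essentially nonnegative and irreducible; (ii) shift by $c I$ with $c$ large to obtain a nonnegative irreducible matrix $B$; (iii) apply Perron--Frobenius to $B$ to get the dominant real eigenvalue, its algebraic simplicity, the strictly positive eigenvector, and the fact that it strictly dominates the real parts of all other eigenvalues; (iv) for the third bullet, argue uniqueness of the nonnegative eigenvector: if $A v = \lambda v$ with $v \geqslant 0$, $v \neq 0$, pair $v$ with the strictly positive left Perron eigenvector $w^{\top}$ of $A$ (which exists because $A^{\top}$ is also essentially nonnegative and irreducible), so that $\lambda\, w^{\top} v = w^{\top} A v = \lambda^*\, w^{\top} v$ with $w^{\top} v > 0$, forcing $\lambda = \lambda^*$; (v) undo the shift. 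The positivity of $\lambda^*$ itself is not asserted by this theorem (that is handled separately in the steady-state analysis), so I would not need it here.

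The main obstacle is not really depth but bookkeeping: one must be careful that "dominant eigenvalue" in the sense used here means strictly larger real part than every \emph{other} eigenvalue, which for a general nonnegative matrix requires more than bare Perron--Frobenius — it needs either primitivity or an argument ruling out other eigenvalues of the same modulus having equal real part. For an essentially nonnegative irreducible matrix this is standard (the shifted matrix $B$ need not be primitive, but the spectral-radius eigenvalue of a nonnegative irreducible matrix still has strictly maximal real part once one observes that any eigenvalue $\mu$ of $B$ with $\operatorname{Re}\mu = \rho(B)$ must satisfy $|\mu| \geqslant \operatorname{Re}\mu = \rho(B) = |\rho(B)| \geqslant |\mu|$, hence $\mu = \rho(B)$, using $|\mu| \leqslant \rho(B)$ from Perron--Frobenius and $|\mu| \geqslant \operatorname{Re}\mu$ always). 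I would state this observation explicitly rather than leave it implicit. Everything else — simplicity of the Perron eigenvalue, existence of the positive right and left eigenvectors, and the pairing argument for uniqueness — is textbook and can be cited.
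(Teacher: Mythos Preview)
Your proposal is correct and follows essentially the same route as the paper: invoke Remark~\ref{Rem_10} to record that $D^{-1}G$ is essentially nonnegative and irreducible, then appeal to Perron--Frobenius theory for such matrices. The paper compresses everything into a single citation of \cite[Theorem~I.7.10]{Kato1982}, whereas you unpack the shift-by-$cI$ reduction, the real-part dominance argument, and the left-eigenvector pairing for uniqueness; but the underlying strategy is identical.
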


\begin{proof}
It suffices to use Remark~\ref{Rem_10} and \cite[Theorem~I.7.10]{Kato1982}.
\end{proof}

\begin{theorem}  \label{Thm_13}
Let Assumptions~{\rm \ref{Ass_1}, \ref{Ass_4}, \ref{Ass_6},} and {\rm \ref{Ass_9}} hold. Then a nonzero steady state
$ \: u^* \: = \: (u^*_1, u^*_2, \ldots, u^*_n)^{\top} \: \in \: [0, +\infty)^{n \times 1} \: $ of
the system~{\rm (\ref{Eq_10})} exists if and only if the dominant eigenvalue~$ \lambda^* $ of the matrix~$ D^{-1} G $
satisfies
\begin{equation}
\lambda^* \, > \, 0, \quad \frac{1}{\lambda^*} \: \in \: (l_{\varphi}, \varphi(0)).  \label{Eq_22}
\end{equation}
Moreover{\rm ,} if {\rm (\ref{Eq_22})} holds{\rm ,} this nontrivial steady state is uniquely determined as
the eigenvector of $ D^{-1} G $ corresponding to $ \lambda^* $ and such that all of its components are
positive and their sum equals~$ \, \varphi^{-1} (1 / \lambda^*) ${\rm :}
\begin{equation}
s^* \,\, = \,\, \sum_{i = 1}^n u^*_i \,\, = \,\, \varphi^{-1} \left( \frac{1}{\lambda^*} \right).  \label{Eq_23}
\end{equation}
\end{theorem}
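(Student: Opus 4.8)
The plan is to reduce the existence question to the Perron--Frobenius type spectral analysis of $ D^{-1} G $ already carried out in Theorem~\ref{Thm_12}, and then to use the strict monotonicity of $ \varphi $ from Assumption~\ref{Ass_6} to translate the spectral conditions into the explicit statements (\ref{Eq_22})--(\ref{Eq_23}). I would present the argument in two directions plus a uniqueness remark.

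\emph{Necessity.} Suppose $ u^* \neq 0 $ is a steady state in $ [0, +\infty)^{n \times 1} $. By (\ref{Eq_17}), $ u^* $ is a nonnegative eigenvector of $ D^{-1} G $ corresponding to the eigenvalue $ 1 / \varphi(s^*) $, where $ s^* = \sum_{i=1}^n u_i^* > 0 $. Since $ \varphi $ is positive on $ [0, +\infty) $ by Assumption~\ref{Ass_6}, this eigenvalue is positive. The third bullet of Theorem~\ref{Thm_12} says that the dominant eigenvalue $ \lambda^* $ is the \emph{only} eigenvalue of $ D^{-1} G $ admitting a nonnegative eigenvector, so $ 1 / \varphi(s^*) = \lambda^* > 0 $; by the simplicity assertion (second bullet of Theorem~\ref{Thm_12}), $ u^* $ is a positive scalar multiple of the positive eigenvector of $ D^{-1} G $ associated with $ \lambda^* $. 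Finally, $ \varphi(s^*) = 1 / \lambda^* $, and because $ s^* > 0 $ while $ \varphi $ is strictly decreasing with $ l_\varphi = \inf_{s \geqslant 0} \varphi(s) $ not attained, we get $ l_\varphi < \varphi(s^*) < \varphi(0) $, i.e., $ 1 / \lambda^* \in (l_\varphi, \varphi(0)) $, which is exactly (\ref{Eq_22}).

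\emph{Sufficiency and uniqueness.} Assume (\ref{Eq_22}). By Remark~\ref{Rem_7}, the strictly decreasing inverse $ \varphi^{-1} $ is defined on $ (l_\varphi, \varphi(0)] $, and $ s^* := \varphi^{-1}(1 / \lambda^*) > 0 $ since $ 1 / \lambda^* < \varphi(0) $. Let $ w \in (0, +\infty)^{n \times 1} $ be the positive eigenvector of $ D^{-1} G $ corresponding to $ \lambda^* $ provided by Theorem~\ref{Thm_12}, and set $ u^* := \dfrac{s^*}{\sum_{i=1}^n w_i}\, w $. Then $ \sum_{i=1}^n u_i^* = s^* $, hence $ \varphi(s^*) = 1 / \lambda^* $, and $ D^{-1} G u^* = \lambda^* u^* = u^* / \varphi(s^*) $, so $ \varphi(s^*)\, G u^* = D u^* $; comparing with (\ref{Eq_12}), this is precisely $ F(u^*) = 0 $, so $ u^* $ is a nonzero steady state satisfying (\ref{Eq_23}). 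For uniqueness, any nonzero steady state is, by the necessity argument, of the form $ c\, w $ with $ c > 0 $, and the scalar is forced by $ \varphi\!\left(\sum_{i=1}^n c w_i\right) = 1 / \lambda^* $; injectivity of $ \varphi $ gives $ \sum_{i=1}^n c w_i = \varphi^{-1}(1 / \lambda^*) = s^* $, which determines $ c $ and yields (\ref{Eq_23}).

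The argument involves no new estimates beyond Theorem~\ref{Thm_12} and Assumption~\ref{Ass_6}; the only two points requiring care are (i) that a \emph{nonnegative} eigenvector is automatically associated with $ \lambda^* $ rather than with some other eigenvalue --- handled by the third bullet of Theorem~\ref{Thm_12} --- and (ii) that the inequalities $ l_\varphi < 1/\lambda^* < \varphi(0) $ are strict, which relies on $ s^* > 0 $, the strict monotonicity of $ \varphi $, and the fact that $ l_\varphi $ is only a limit. I expect (ii), i.e., correctly matching the open endpoints of the interval in (\ref{Eq_22}) with the range of $ \varphi $ restricted to $ (0, +\infty) $, to be the most delicate bookkeeping step.
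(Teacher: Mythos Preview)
Your proposal is correct and follows precisely the route indicated by the paper, which simply refers to Theorem~\ref{Thm_12} and the relations~(\ref{Eq_17})--(\ref{Eq_18}); you have spelled out in full the necessity, sufficiency, and uniqueness arguments that the paper leaves implicit. In particular, your handling of the two delicate points --- that a nonnegative eigenvector forces the eigenvalue to be $\lambda^*$ via the third bullet of Theorem~\ref{Thm_12}, and that the strict inequalities in~(\ref{Eq_22}) come from $s^*>0$, the strict monotonicity of $\varphi$, and $l_\varphi$ being only a limit --- is exactly what the paper's one-line proof relies on.
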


\begin{proof}
It suffices to use Theorem~\ref{Thm_12} as well as the relations~(\ref{Eq_17}) and (\ref{Eq_18}).
\end{proof}

\begin{remark}  \rm  \label{Rem_14}
If $ \varphi $ is selected according to (\ref{Eq_14}), the relations~(\ref{Eq_22}) and (\ref{Eq_23}) transform into
\begin{equation}
\lambda^* \, > \, 1, \quad s^* \,\, = \,\, \sum_{i = 1}^n u^*_i \,\, = \,\, \frac{1}{\gamma} \, \ln \, \lambda^*
\label{Eq_24}
\end{equation}
(recall Remark~\ref{Rem_7}).  \qed
\end{remark}

By using the well-known sufficient conditions for stability and instability from the first approximation (see, e.\,g.,
\cite[Theorems~5.1 and 5.2 in Chapter~2]{Godunov1997}), it is easy to establish the following result.

\begin{theorem}  \label{Thm_15}
Let Assumptions~{\rm \ref{Ass_1}, \ref{Ass_4}, \ref{Ass_6},} and {\rm \ref{Ass_9}} hold{\rm ,} and let
$ \: u^* \: = \: (u^*_1, u^*_2, \ldots, u^*_n)^{\top} \: \in \: [0, +\infty)^{n \times 1} \: $ be a nonzero
steady state of the system~{\rm (\ref{Eq_10})}. Suppose also that the growth saturation function~$ \varphi $ is
twice differentiable at the point $ \: s^* \, = \, \sum_{i = 1}^n u^*_i $. Consider the Jacobian matrix
function~$ \mathrm{D} F $ determined by {\rm (\ref{Eq_13})}. At the steady state~$ u^* ${\rm ,} its elements
can be represented as
\begin{equation}
\begin{aligned}
& \frac{\partial F_i(u^*)}{\partial u_j} \:\: = \:\: \begin{cases}
\frac{\varphi'(s^*)}{\varphi(s^*)} \, d_i u^*_i \,\, + \,\, \varphi(s^*) \, g_{ii} \,\, - \,\, d_i, & i = j, \\
\frac{\varphi'(s^*)}{\varphi(s^*)} \, d_i u^*_i \,\, + \,\, \varphi(s^*) \, g_{ij}, & i \neq j,
\end{cases} \\
& i = \overline{1, n}, \quad j = \overline{1, n}.
\end{aligned}  \label{Eq_25}
\end{equation}
If the real parts of all eigenvalues of $ \, \mathrm{D} F(u^*) $ are negative{\rm ,} then the steady state~$ u^* $ is
asymptotically stable. If at least one eigenvalue of $ \mathrm{D} F(u^*) $ has positive real part{\rm ,} then $ u^* $
is unstable.
\end{theorem}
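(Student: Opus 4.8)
The plan is to verify the claimed identity~(\ref{Eq_25}) for the Jacobian at $u^*$ and then simply quote the standard first-approximation (Lyapunov linearization) theorems. The bulk of the argument is the algebraic simplification, after which the stability/instability dichotomy is immediate.

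First I would start from the general Jacobian formula~(\ref{Eq_13}) and evaluate each entry at $u = u^*$. Writing $s^* = \sum_{\nu=1}^n u^*_\nu$, the common factor $\varphi'(s^*) \sum_{\nu=1}^n g_{i\nu} u^*_\nu$ appears in both the diagonal ($i=j$) and off-diagonal ($i \neq j$) cases. The key observation is that $u^*$ is an eigenvector: from~(\ref{Eq_17}) we have $\varphi(s^*)\, G u^* = D u^*$, i.e. componentwise $\varphi(s^*) \sum_{\nu=1}^n g_{i\nu} u^*_\nu = d_i u^*_i$ for every $i = \overline{1,n}$. Dividing by $\varphi(s^*)$ (which is positive by Assumption~\ref{Ass_6}, and in any case nonzero since $s^*$ lies in its domain), one gets $\varphi'(s^*) \sum_{\nu=1}^n g_{i\nu} u^*_\nu = \frac{\varphi'(s^*)}{\varphi(s^*)}\, d_i u^*_i$. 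Substituting this back into~(\ref{Eq_13}) turns the first summand in each branch into $\frac{\varphi'(s^*)}{\varphi(s^*)}\, d_i u^*_i$, while the remaining terms are exactly $\varphi(s^*) g_{ii} - d_i$ (for $i=j$) and $\varphi(s^*) g_{ij}$ (for $i \neq j$). This is precisely~(\ref{Eq_25}). I would note that twice-differentiability of $\varphi$ at $s^*$ is what guarantees $\mathrm{D}F$ is itself differentiable (hence $F$ is $C^1$ with a locally Lipschitz derivative) in a neighborhood of $u^*$, which is the regularity actually needed for the linearization theorems; the formula~(\ref{Eq_25}) itself only needs differentiability of $\varphi$.

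Having established~(\ref{Eq_25}), the remainder is a direct citation. Since $u^*$ is an equilibrium of the autonomous $C^1$ system~(\ref{Eq_10}) and $\mathrm{D}F(u^*)$ is the linearization there, the classical results on stability/instability by the first approximation (e.g.\ \cite[Theorems~5.1 and 5.2 in Chapter~2]{Godunov1997}) apply verbatim: if every eigenvalue of $\mathrm{D}F(u^*)$ has negative real part, $u^*$ is (locally) asymptotically stable; if some eigenvalue has positive real part, $u^*$ is unstable. No spectral analysis of $\mathrm{D}F(u^*)$ needs to be carried out here—the theorem only asserts these conditional implications.

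The only mild subtlety—hardly an obstacle—is bookkeeping: making sure the eigenvector relation is applied to the correct index and that the division by $\varphi(s^*)$ is legitimate, which it is because~(\ref{Eq_18}) places $1/\lambda^* = \varphi(s^*)$ strictly inside $(l_\varphi, \varphi(0)]$, so $\varphi(s^*) > 0$. One might also remark that $u^*$ lies on the boundary of the positively invariant orthant only in the degenerate sense that all its components are in fact positive (by Theorem~\ref{Thm_13}), so $u^*$ is an interior equilibrium and the unrestricted linearization theorems are exactly the right tool. With these points noted, the proof is complete.
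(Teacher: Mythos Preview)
Your proposal is correct and follows the same approach as the paper, which simply states that the result follows from the well-known first-approximation stability and instability theorems (citing \cite[Theorems~5.1 and 5.2 in Chapter~2]{Godunov1997}) and leaves the derivation of~(\ref{Eq_25}) implicit. Your explicit substitution of the steady-state relation $\varphi(s^*)\,G u^* = D u^*$ into~(\ref{Eq_13}), together with the remarks on the positivity of $\varphi(s^*)$ and the interiority of $u^*$, fills in precisely the details the paper omits.
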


\begin{remark}  \rm  \label{Rem_16}
Recall the boundedness of the total population size along state trajectories as mentioned in Theorem~\ref{Thm_5}.
If a state trajectory is not attracted by the nontrivial steady state with positive components (see Theorem~\ref{Thm_13}),
the trivial steady state at the origin may be approached. The latter case means the extinction of the whole considered
quasispecies population and is possible if all of the death rates are sufficiently large.

Note that the Jacobian matrix $ \mathrm{D} F(0) $ at the origin equals $ \: \varphi(0) \, G \, - \, D $.
An interesting open problem is finding out sufficient conditions for the following dichotomic property which seems natural:
either (i)~the trivial steady state at the origin is asymptotically stable globally in the nonnegative orthant while
the nonzero steady state $ \, u^* \in [0, +\infty)^{n \times 1} \, $ does not exist, or (ii)~the trivial steady state is unstable
while the nontrivial steady state~$ u^* $ exists and is asymptotically stable globally in the nonnegative orthant with
excluded origin.  \qed
\end{remark}

\subsection{Numerical simulations}
\label{Subsect_2_4}

The numerical simulation results in this subsection demonstrate the possibility of two scenarios for our open models.
Namely, with increasing a certain indicative parameter, the error threshold is observed when the population does not
extinct, or, alternatively, the extinction already occurs prior to the error threshold. In the second case, the error
threshold is understood nominally.

For simplicity, we adopt the exponential form~(\ref{Eq_14}) of the growth saturation function~$ \varphi $ and consider
the dynamical system~(\ref{Eq_10}) with the permutation invariant Crow--Kimura formalism such that $ n = N + 1 $,
$ \, G = M + \hat{M}, \, $ and $ \hat{M} $ is given by (\ref{Eq_5}). The selected form of $ \varphi $ allows for using
the simplifications mentioned in Remarks~\ref{Rem_7} and \ref{Rem_14}.

We take
\begin{equation}
\arraycolsep=1.4pt
\def\arraystretch{1.5}
\begin{array}{c}
\gamma = 1, \quad N = 50, \\
m \: = \: (10, 1, 1, \ldots, 1)^{\top} \: \in \: (0, +\infty)^{51 \times 1}, \quad M \: = \: \mathrm{diag} \, m, \\
d \: = \: (\Delta, 0.5, 0.5, \ldots, 0.5)^{\top} \: \in \: (0, +\infty)^{51 \times 1}, \quad D \: = \: \mathrm{diag} \, d,
\end{array}  \label{Eq_25_2}
\end{equation}
while the mutation rate parameter~$ \mu $ (see (\ref{Eq_5})) and the death rate~$ \Delta $ for the genotypes of the zero class
are not fixed, so that the key quantities~$ \lambda^* $ and $ u^* $ become functions of $ \mu $ and $ \Delta $.

Fig.~\ref{Fig_1} illustrates the situation when, with the increase of the mutation rate~$ \mu $, the population clearly goes through
the error threshold with no extinction. It is also demonstrated that, as the death rate~$ \Delta $ of the zero class grows, the error
threshold occurs for lower~$ \mu $. The graphs of the maximum over the real parts of the eigenvalues of $ \mathrm{D} F(u^*) $
yield asymptotic stability of $ u^* $ for all observed values of $ \mu $ except for the critical value of $ \mu $ specifying
the error threshold. However, it is natural to expect the stability for this critical $ \mu $ as well. Besides, extensive numerical
simulations allow for conjecturing that the existence of the nontrivial steady state~$ u^* $ would imply its asymptotic stability globally
in the nonnegative orthant with excluded origin (recall Remark~\ref{Rem_16}).

\begin{figure}[!th]
\includegraphics[width=0.48\textwidth]{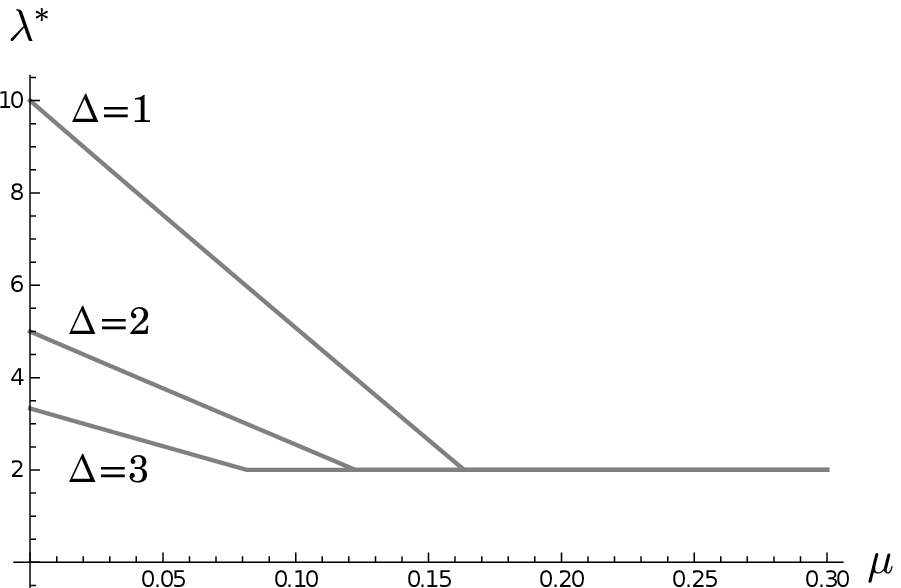} \hfill
\includegraphics[width=0.48\textwidth]{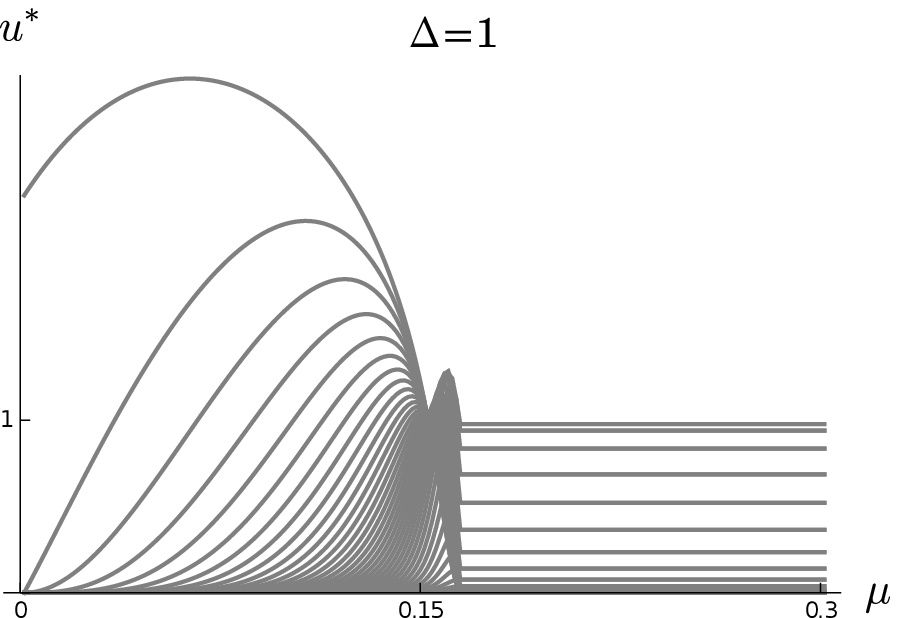} \\

\vspace{2mm}

\includegraphics[width=0.48\textwidth]{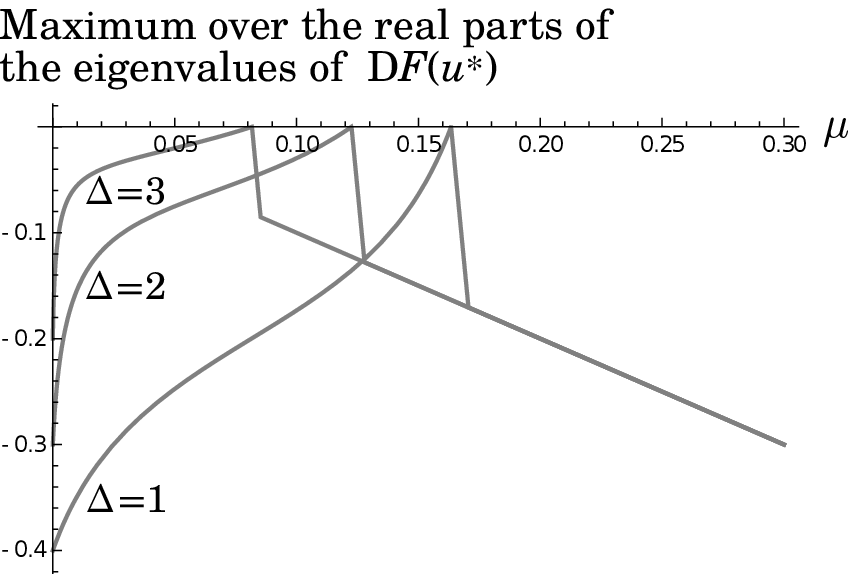} \hfill
\includegraphics[width=0.48\textwidth]{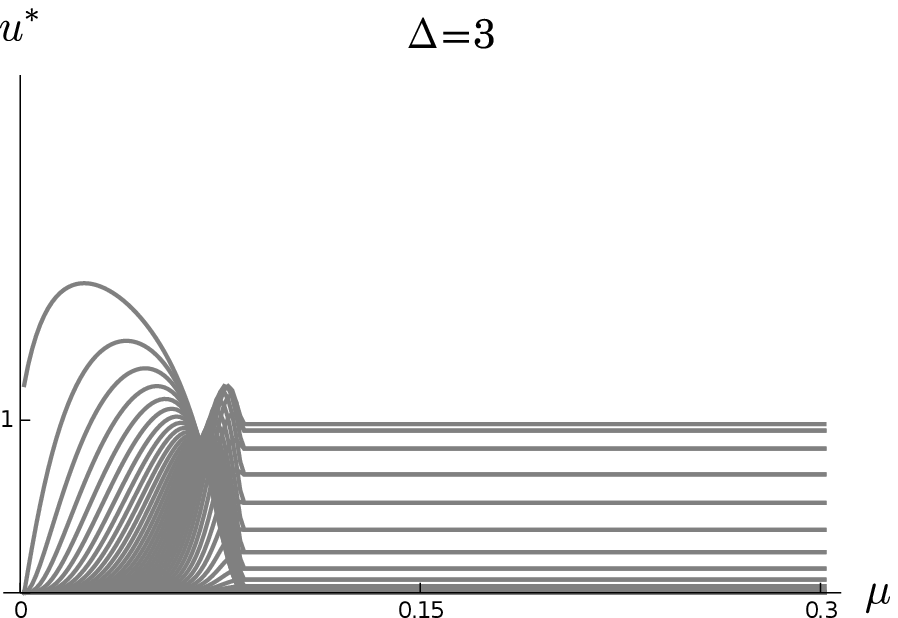}
\bf \caption{\rm Numerical simulation results in Subsection~\ref{Subsect_2_4} for the parameter values~(\ref{Eq_25_2}).
The top left subfigure illustrates the dependence of the steady-state population fitness~$ \, \tilde{f}[u^*] = \lambda^* \, $ on the mutation rate
parameter~$ \mu $; here $ \Delta = 1,2,3 $ for the curves from top to bottom, respectively. The right subfigures show the dependence of
the components of the stationary quasispecies distribution~$ u^* $ on $ \mu $, the top right subfigure corresponds to $ \Delta = 1 $,
the bottom right subfigure relates to $ \Delta = 3 $, and the error threshold with no extinction can be seen in both of them. Moreover,
they demonstrate that, as the death rate~$ \Delta $ of the zero class grows, the error threshold occurs for lower~$ \mu $. Finally,
the bottom left subfigure indicates how the maximum over the real parts of the eigenvalues of the Jacobian matrix~$ \mathrm{D} F(u^*) $
depends on $ \mu $; here $ \Delta = 1,2,3 $ for the curves from bottom to top, respectively. For the bottom left subfigure, Theorem~\ref{Thm_15}
allows asymptotic stability of $ u^* $ to be concluded for all observed values of $ \mu $ except for the critical value of $ \mu $
specifying the error threshold. It is also natural to expect the stability for this critical $ \mu $, even though Theorem~\ref{Thm_15}
cannot be applied in case of the vanishing maximal real part.}
\label{Fig_1}
\end{figure}

Fig.~\ref{Fig_2} shows that the death rate~$ \Delta $ can also be considered as a bifurcation parameter whose changes eventually lead to
the error threshold.

\begin{figure}
\centering
\includegraphics[width=0.48\textwidth]{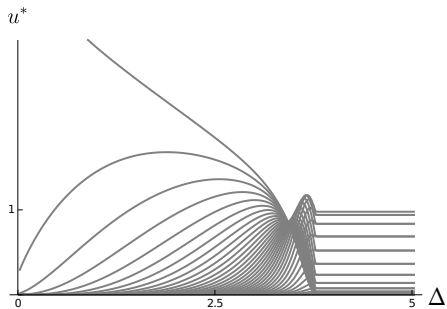}
\bf \caption{\rm Numerical simulation results in Subsection~\ref{Subsect_2_4} for the parameter values~(\ref{Eq_25_2}) and fixed $ \mu = 0.05 $.
Dependence of the components of the stationary quasispecies distribution~$ u^* $ on the death rate parameter~$ \Delta $.}
\label{Fig_2}
\end{figure}

We now change the death rate vector to
\begin{equation}
d \: = \: (\Delta, 2, 2, \ldots, 2)^{\top} \: \in \: (0, +\infty)^{51 \times 1}  \label{Eq_25_3}
\end{equation}
(i.\,e., the death rates of all classes except for the zero class are increased from $ 0.5 $ to $ 2 $), while all other
parameters in (\ref{Eq_25_2}) remain the same. Fig.~\ref{Fig_3} illustrates that the error threshold nominally appears
already after the extinction threshold~$ \lambda^* = 1 $ and hence has almost no effect on the dynamics. The extinction takes place
when $ 0 < \lambda^* \leqslant 1 $; nonpositive values of $ \lambda^* $ are not observed. Note that, according to Theorem~\ref{Thm_13}
and Remark~\ref{Rem_14}, the nontrivial steady state~$ u^* $ exists if and only if $ \lambda^* > 1 $. For $ 0 < \lambda^* \leqslant 1 $,
the right-hand side of the normalization condition $ \: \sum_{i = 1}^n u^*_i \, = \, \ln(\lambda^*) / \gamma \: $
(see (\ref{Eq_24})) becomes negative, and the nontrivial steady state does not exist, so we understand $ u^* $ nominally.

\begin{figure}[!th]
\centering
\includegraphics[width=0.48\textwidth]{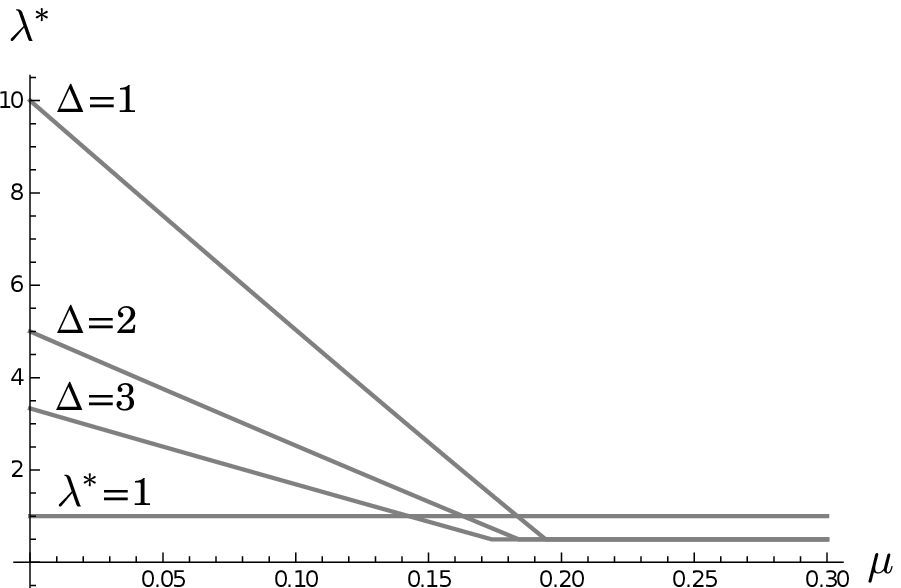} \hfill
\includegraphics[width=0.48\textwidth]{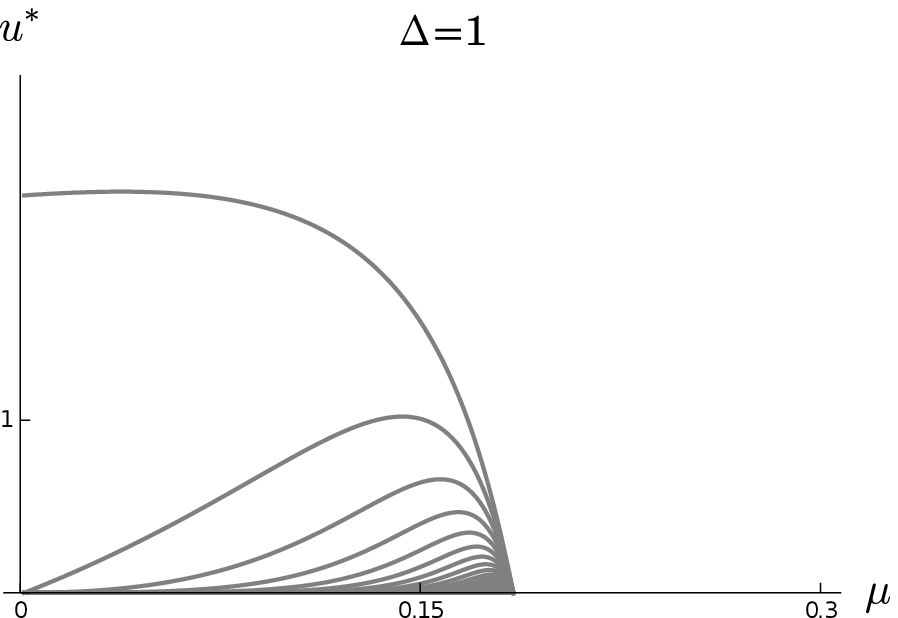} \\

\vspace{4mm}

\includegraphics[width=0.48\textwidth]{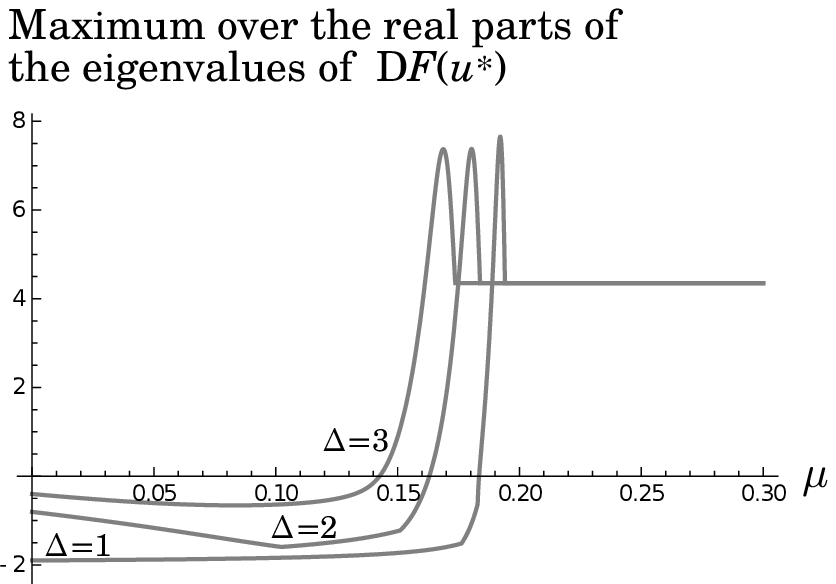}
\bf \caption{\rm Numerical simulation results in Subsection~\ref{Subsect_2_4} for the death rates~(\ref{Eq_25_3}) and all other
parameters given by (\ref{Eq_25_2}). The top left subfigure illustrates the dependence of the steady-state population
fitness~$ \, \tilde{f}[u^*] = \lambda^* \, $ on the mutation rate parameter~$ \mu $, as well as the critical value~$ \lambda^* = 1 $
(due to Theorem~\ref{Thm_13} and Remark~\ref{Rem_14}, the nontrivial steady-state quasispecies distribution exists if and only if
$ \lambda^* > 1 $); here $ \Delta = 1,2,3 $ and $ \lambda^* = 1 $ for the curves from top to bottom, respectively. The top right
subfigure shows the dependence of the components of $ u^* $ on $ \mu $ for the fixed~$ \Delta = 1 $. The extinction occurs when
$ 0 < \lambda^* \leqslant 1 $; nonpositive values of $ \lambda^* $ are not observed. For $ 0 < \lambda^* \leqslant 1 $, the normalization
condition in (\ref{Eq_24}) leads to a negative sum of the components of $ u^* $, so $ u^* $ has a nominal meaning.  The extinction
threshold~$ \lambda^* = 1 $ appears prior to the nominal error threshold. Finally, the bottom subfigure indicates how the maximum
over the real parts of the eigenvalues of the Jacobian matrix~$ \mathrm{D} F(u^*) $ depends on $ \mu $; here $ \Delta = 1,2,3 $ for
the curves from bottom to top, respectively.}
\label{Fig_3}
\end{figure}

The presented numerical simulation results indicate that even the simplified permutation invariant Crow--Kimura model in our open
setting possesses a rich dynamical behavior with different bifurcation scenarios.

\section{Steady-state fitness maximization}
\label{Sec_3}

\subsection{Theoretical analysis}

Similarly to the discussion in \cite[Introduction]{BratusDrozhzhinYakushkina2018}, we adopt the hypotheses that
evolutionary adaptation of the fitness landscape of a quasispecies population is significantly slower than
the internal dynamics of the corresponding state variables (in our case, this dynamics is governed by
the system~(\ref{Eq_10})), and that changes in the fitness landscape are aimed at eventually maximizing the population
fitness (specified by Definition~\ref{Def_8} in our case) under some constraints. Based on these hypotheses as well as on
the relations~(\ref{Eq_20}), (\ref{Eq_21}) and Theorem~\ref{Thm_13}, we arrive at the problem of maximizing the population
fitness
$$
\frac{\sum_{i = 1}^n m_i \, u^*_i[m]}{\sum_{i = 1}^n d_i \, u^*_i[m]} \:\, = \:\, \lambda^*[m]
$$
at the nonzero steady state
$ \: u^*[m] \: = \: (u^*_1[m], \, u^*_2[m], \, \ldots, \, u^*_n[m])^{\top} \: \in \: [0, +\infty)^{n \times 1} \: $
of (\ref{Eq_10}) over fitness landscapes $ \: m \: = \: (m_1, m_2, \ldots, m_n)^{\top} \: \in \: \mathbb{R}^{n \times 1} \: $
subject to the constraints
\begin{equation}
m \, \in \, \Pi,  \label{Eq_26}
\end{equation}
\begin{equation}
\lambda^*[m] \: > \: 0, \quad \frac{1}{\lambda^*[m]} \: \in \: (l_{\varphi}, \varphi(0)),  \label{Eq_27}
\end{equation}
where $ \Pi $ is a bounded subset of $ \mathbb{R}^{n \times 1} $, and the death rates~$ \, d_i $, $ i = \overline{1, n}, \, $
are fixed for simplicity.

As was mentioned in the introduction, we try to keep the connection with the classical quasispecies models as close as reasonably
possible. For our open quasispecies models, it is therefore useful to consider first the simplified case when mutations affect
the fitness landscape (and, consequently, the growth rates), but not the death rates. Modeling the influence of mutations also
on the death rates is a possible subject of future research.

Besides, it is useful to note the paper~\cite{BratusSemenovNovozhilov2018} that contains a relevant discussion on
the applicability of the fitness maximization principle, though as applied to classical replicator systems.

As was discussed in Remark~\ref{Rem_16}, the nontrivial steady state may be unstable for some admissible fitness
landscapes (when the growth terms are overriden by the mortality terms). However, one would expect that, if any optimal
steady state is unstable, all other admissible steady states will also be unstable, although a rigorous derivation of
sufficient conditions for this property remains an open problem. Note also that, in case $ l_{\varphi} = 0 $,
the constraint~(\ref{Eq_27}) is simplified to
\begin{equation}
\lambda^*[m] \: > \: \frac{1}{\varphi(0)} \, .  \label{Eq_28}
\end{equation}
For example, if $ \varphi $ is given by (\ref{Eq_14}), then (\ref{Eq_27}) transforms into $ \, \lambda^*[m] > 1 \, $
(see Remark~\ref{Rem_7}).

Thus, the problem of maximizing the dominant eigenvalue~$ \lambda^*[m] $ of the matrix~$ \, D^{-1} \cdot G[m] \, $
(see Theorem~\ref{Thm_13}), i.\,e.,
\begin{equation}
\lambda^*[m] \: \longrightarrow \: \max_{m \, \in \, \Pi} \, ,  \label{Eq_29}
\end{equation}
plays a central role.

\begin{remark}  \rm  \label{Rem_17}
For the classical quasispecies models described by the system~(\ref{Eq_6}), one has to consider the dominant
eigenvalue of the matrix~$ G $, and, therefore, the subsequent investigation regarding the optimization
problem~(\ref{Eq_29}) will also be valid in the classical case after the formal replacement of $ D $ with
the identity matrix of size~$ n \times n $.  \qed
\end{remark}

\begin{remark}  \rm  \label{Rem_18}
If $ \lambda^*[m] $ is defined for all $ m \in \Pi $, then this is a continuous function of $ m \in \Pi $ due to
the well-known fact that the eigenvalues of a real or complex square matrix continuously depend on its entries
(see, e.\,g., \cite[Theorem~2.11]{Zhang2011}).  \qed
\end{remark}

An important case is when (\ref{Eq_29}) becomes a convex optimization problem.

\begin{assumption}  \label{Ass_19}
$ \Pi $ is a convex compact subset of $ \, \mathbb{R}^{n \times 1} $.
\end{assumption}

\begin{theorem}  \label{Thm_20}
Let Assumptions~{\rm \ref{Ass_1}, \ref{Ass_4}, \ref{Ass_6}, \ref{Ass_9},} and {\rm \ref{Ass_19}} hold{\rm ,} and
let the matrix~$ G $ have the Crow--Kimura form $ \, G = M + \hat{M}, \, $ where $ \, M \, = \, \mathrm{diag} \, m, \, $
while $ \hat{M} $ does not depend on $ m $. Then the function
$$
\mathbb{R}^{n \times 1} \, \ni \, m \:\, \longmapsto \:\, \lambda^*[m] \, \in \, \mathbb{R}
$$
is convex.
\end{theorem}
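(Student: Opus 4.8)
The plan is to exhibit $\lambda^{*}[m]$ as a pointwise limit of convex functions of $m$, via a matrix-exponential (``heat-kernel'') representation of the dominant eigenvalue together with the elementary fact that a finite sum of log-convex functions is log-convex.

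First I would reduce and fix notation. Write
$$D^{-1}G[m] \;=\; D^{-1}M + D^{-1}\hat{M} \;=\; V(m) + B,$$
where $V(m) = \mathrm{diag}\,(m_{1}/d_{1},\ldots,m_{n}/d_{n})$ is affine in $m$ and $B = D^{-1}\hat{M}$ does not depend on $m$. The off-diagonal zero pattern of $A[m] := V(m)+B$ is independent of $m$ and coincides with that of $D^{-1}\hat{M}$; since $\hat{M}$ is irreducible (Assumption~\ref{Ass_9}, as $G=M+\hat M$ with $M$ diagonal), the reasoning of Remark~\ref{Rem_10} shows that $A[m]$ is essentially nonnegative and irreducible for every $m \in \mathbb{R}^{n\times 1}$. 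Hence, by Theorem~\ref{Thm_12}, the dominant eigenvalue $\lambda^{*}[m]$ is defined for every $m$ (real, and equal to the largest real part of the spectrum), so convexity on all of $\mathbb{R}^{n\times 1}$ is a meaningful statement. I would then record the representation
\begin{equation}
\lambda^{*}[m] \;=\; \lim_{t \to +\infty}\,\frac{1}{t}\,\ln\!\bigl(\mathbf{1}^{\top} e^{t A[m]} \mathbf{1}\bigr), \qquad \mathbf{1} = (1,\ldots,1)^{\top} \in \mathbb{R}^{n\times 1},
\label{Eq_plan_rep}
\end{equation}
which holds because, writing $A[m] = N[m] - cI$ with a large constant $c$ and $N[m]$ entrywise nonnegative, one has $\lambda^{*}[m] = \rho(N[m]) - c$ with $\rho$ the spectral radius, while $e^{t A[m]} = e^{-ct}e^{t N[m]}$ is entrywise positive for $t > 0$ (essential nonnegativity plus irreducibility of $A[m]$) and $\mathbf{1}^{\top} e^{t N[m]} \mathbf{1}$ grows like $e^{t\rho(N[m])}$ up to dimension-dependent multiplicative constants that disappear after dividing by $t$ and letting $t \to \infty$.

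The core step is to prove that, for each fixed $t > 0$, the map $m \mapsto \mathbf{1}^{\top} e^{t A[m]} \mathbf{1}$ is log-convex on $\mathbb{R}^{n\times 1}$. For this I would invoke the Lie product formula $e^{t A[m]} = \lim_{k \to \infty}\bigl(e^{(t/k)V(m)}\,e^{(t/k)B}\bigr)^{k}$. Since $V(m)$ is diagonal, expanding $\bigl(e^{(t/k)V(m)}\,e^{(t/k)B}\bigr)^{k}$ over paths $i_{0},i_{1},\ldots,i_{k}$ shows that every entry, and hence $\mathbf{1}^{\top}\bigl(e^{(t/k)V(m)}\,e^{(t/k)B}\bigr)^{k}\mathbf{1}$, is a finite sum of terms $c\,\exp(\ell(m))$ with $c \geqslant 0$ a product of (fixed) entries of $e^{(t/k)B}$ and $\ell$ linear in $m$ (a sum of terms $\tfrac{t}{k}\,m_{i_{\ell}}/d_{i_{\ell}}$). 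Each such term is log-convex; a finite sum of log-convex functions is log-convex by H\"older's inequality; log-convexity is preserved under pointwise limits; and the limit $\mathbf{1}^{\top}e^{t A[m]}\mathbf{1}$ is positive (again because $e^{t A[m]} > 0$ entrywise), so its logarithm is well defined. Therefore $g_{t}(m) := \ln\!\bigl(\mathbf{1}^{\top} e^{t A[m]} \mathbf{1}\bigr)$ is convex for each $t > 0$, and by~(\ref{Eq_plan_rep}) the function $\lambda^{*}[m] = \lim_{t \to \infty}\tfrac{1}{t}\,g_{t}(m)$ is a pointwise limit of convex functions, hence convex. (Assumptions~\ref{Ass_6} and \ref{Ass_19} are not used in this argument; they belong only to the surrounding problem set-up.)

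A shorter alternative is to cite Cohen's theorem that the dominant eigenvalue of an essentially nonnegative matrix is a convex function of a parameter whenever each matrix entry is: here the diagonal entries $m_{i}/d_{i} + (\hat{M})_{ii}/d_{i}$ of $A[m]$ are affine and the off-diagonal entries $(\hat{M})_{ij}/d_{i}$ are constant in $m$, so the hypotheses are immediate; I would nonetheless present the self-contained argument above. The main obstacle is entirely in the careful write-up of the core step — justifying the interchange of the limit $k \to \infty$ in the Lie product formula with the log-convexity inequality (routine, since that inequality passes to pointwise limits) and keeping track of positivity so that ``log-convex'' is well posed throughout. Everything following~(\ref{Eq_plan_rep}) is then immediate.
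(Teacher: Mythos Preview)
Your proof is correct, but the paper's own argument is a one-liner: it simply invokes Cohen's 1981 result that the dominant eigenvalue of an essentially nonnegative matrix is convex as a function of its diagonal entries --- precisely the ``shorter alternative'' you mention at the end --- observing that the off-diagonal entries of $D^{-1}G[m] = D^{-1}M + D^{-1}\hat{M}$ are constant in $m$ while the diagonal entries are affine. Your main route is genuinely different: you re-prove (a version of) Cohen's theorem from scratch via the heat-kernel representation $\lambda^{*}[m] = \lim_{t\to\infty} t^{-1}\ln(\mathbf{1}^{\top}e^{tA[m]}\mathbf{1})$ together with the Lie--Trotter formula, reducing everything to the closure of log-convex functions under finite sums and pointwise limits. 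This buys self-containedness and makes transparent \emph{why} only diagonal dependence is needed (the diagonal factor $e^{(t/k)V(m)}$ is the sole place $m$ enters, contributing a log-affine factor to each path term), whereas the paper's approach buys brevity. Your remark that Assumptions~\ref{Ass_6} and~\ref{Ass_19} play no role in the convexity argument itself is also correct.
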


\begin{proof}
It suffices to use the result of \cite{Cohen1981} saying that the dominant eigenvalue of an essentially nonnegative matrix
is convex if considered as a function of the diagonal elements of this matrix.
\end{proof}

\begin{remark}  \rm  \label{Rem_21}
If $ G $ has the Eigen form $ \, G = K M, \, $ where $ \, M \, = \, \mathrm{diag} \, m, \, $ while $ K $ satisfies
(\ref{Eq_2}) and does not depend on $ m $, then the result of \cite{Cohen1981} cannot be applied directly. In this case,
obtaining sufficient conditions for the convexity of the dominant eigenvalue~$ \lambda^*[m] $ of $ \, D^{-1} \cdot G[m] \, $
with respect to $ m $ remains an open problem.  \qed
\end{remark}

We now assume that the considered system satisfies this convexity property on the fitness landscape constraint set~$ \Pi $.

\begin{assumption}  \label{Ass_22}
The matrix~$ \, D^{-1} \cdot G[m] \, $ admits the dominant eigenvalue $ \, \lambda^*[m] \in \mathbb{R} \, $ for any
$ m \in \Pi ${\rm ,} and the function
\begin{equation}
\Pi \, \ni \, m \:\, \longmapsto \:\, \lambda^*[m] \, \in \, \mathbb{R}  \label{Eq_30}
\end{equation}
is convex.
\end{assumption}

Recall that a point~$ x $ of a convex set~$ C \subseteq \mathbb{R}^n $ is called an extreme point of $ C $
if and only if there is no way to express $ x $ as a convex combination $ \: \alpha y \, + \, (1 - \alpha) z \: $
with $ y \in C $, $ z \in C $, and $ \alpha \in (0, 1) $, except by putting $ y = z = x $ (see, e.\,g.,
\cite[\S 18]{Rockafellar1970}).

\begin{theorem}  \label{Thm_23}
Let Assumptions~{\rm \ref{Ass_1}, \ref{Ass_4}, \ref{Ass_6}, \ref{Ass_9}, \ref{Ass_19},} and {\rm \ref{Ass_22}} hold.
Then the function~{\rm (\ref{Eq_30})} is continuous on $ \Pi $ and attains its maximum over $ \Pi ${\rm ,} and any
related maximizer is an extreme point of $ \Pi $.
\end{theorem}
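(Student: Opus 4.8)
The plan is to treat the three assertions in turn, with the extreme-point statement as the substantive one. Continuity and existence are quick. By Assumption~\ref{Ass_22} the dominant eigenvalue $\lambda^*[m]$ is defined for every $m\in\Pi$, so Remark~\ref{Rem_18} already yields that $m\mapsto\lambda^*[m]$ is continuous on $\Pi$ (one may alternatively invoke that a finite convex function is continuous on a convex set). Since $\Pi$ is compact by Assumption~\ref{Ass_19}, the Weierstrass extreme value theorem gives that this continuous function attains its maximum $c^*=\max_{m\in\Pi}\lambda^*[m]$; write $S=\{m\in\Pi:\lambda^*[m]=c^*\}$ for the nonempty set of maximizers.

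The heart of the argument is a reduction. Let $m^*\in S$ and suppose, for contradiction, that $m^*$ is \emph{not} an extreme point of $\Pi$. Then $m^*=(1-\theta_0)y+\theta_0 z$ for some $y,z\in\Pi$ with $y\neq z$ and $\theta_0\in(0,1)$, and the whole segment $[y,z]$ lies in $\Pi$ by convexity. Consider the one-variable restriction $h(\theta)=\lambda^*[(1-\theta)y+\theta z]$, $\theta\in[0,1]$, which is convex by Assumption~\ref{Ass_22}. Since $h(\theta_0)=c^*$ is the global maximum while $h(0),h(1)\leqslant c^*$, the convexity inequality $h(\theta_0)\leqslant(1-\theta_0)h(0)+\theta_0 h(1)$ must hold with equality; equality at an interior point forces $h$ to be affine on $[0,1]$, and then $h(0)=h(1)=c^*$ forces $h\equiv c^*$. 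Thus $\lambda^*$ is constant, equal to $c^*$, along the entire nondegenerate segment $[y,z]\subseteq\Pi$. It therefore suffices to rule out such a flat segment.

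To rule it out I would parametrize the segment by $B(\theta)=D^{-1}G[(1-\theta)y+\theta z]$. As $G[m]$ is affine in $m$ (for both the Crow--Kimura form $G=M+\hat M$ and the Eigen form $G=KM$), $B(\theta)=B(0)+\theta C$ is an affine family of essentially nonnegative irreducible matrices, and $\lambda^*=\rho(B(\theta))$ is their dominant eigenvalue. For the Crow--Kimura form the perturbation is diagonal, $C=\mathrm{diag}\,w$ with $w_i=(z_i-y_i)/d_i$, and $w\neq0$ because $y\neq z$ and $d_i>0$. The dominant eigenvalue of an essentially nonnegative irreducible matrix, viewed as a function of its diagonal, is convex (this is the content of \cite{Cohen1981} used in Theorem~\ref{Thm_20}) and is in fact \emph{strictly} convex except along the direction $\mathbf 1$, along which $B(\theta)=B(0)+\theta c\,I$ so that $\rho(B(\theta))=\rho(B(0))+\theta c$ is affine with nonzero slope. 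Hence if $w\not\propto\mathbf 1$ the restriction is strictly convex and nonconstant, while if $w\propto\mathbf 1$ it is affine and nonconstant; either way $\rho(B(\theta))$ cannot be constant, contradicting the flat segment. Consequently no maximizer can fail to be extreme, and every element of $S$ is an extreme point of $\Pi$.

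I expect the main obstacle to be exactly the strictness invoked in the last step: Assumption~\ref{Ass_22} only supplies convexity, whereas the literal claim ``any maximizer is extreme'' genuinely requires that $\lambda^*$ be nonconstant on every nondegenerate segment, i.e.\ strict convexity transverse to the single degenerate direction together with strict monotonicity along it. For the Crow--Kimura form this can be secured by an elementary observation: were $\rho(B(\theta))\equiv\rho_0$ on an interval, then $\rho_0$ would be an eigenvalue of $B(0)+\theta\,\mathrm{diag}\,w$ for a whole interval of $\theta$, forcing the polynomial $\theta\mapsto\det(\rho_0 I-B(0)-\theta\,\mathrm{diag}\,w)$ to vanish identically, which is impossible once one checks this polynomial is not identically zero (its leading behaviour is governed by $\prod_{w_i\neq0}w_i\neq0$). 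For the Eigen form $G=KM$ the perturbation $C$ is no longer diagonal and the required strict convexity of $\lambda^*[m]$ is tied to the convexity question left open in Remark~\ref{Rem_21}; establishing it there is the delicate point on which the full generality of the statement hinges.
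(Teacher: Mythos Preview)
Your continuity and attainment arguments are exactly what the paper does: Remark~\ref{Rem_18} gives continuity, Assumption~\ref{Ass_19} gives compactness, and Weierstrass yields the maximum. The paper's entire proof is then a single citation of \cite[Corollary~32.3.1]{Rockafellar1970}, the standard fact that a convex function on a nonempty closed bounded convex set attains its supremum at an extreme point. That is all the paper invokes; no strict-convexity argument, no case analysis on the form of~$G$, no determinant computations.

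Where you diverge is in trying to prove the literal reading ``\emph{every} maximizer is extreme.'' You are right that mere convexity (Assumption~\ref{Ass_22}) does not yield this: a convex function constant on a segment has non-extreme maximizers, and nothing in the hypotheses excludes that. Your subsequent attempt to rule out flat segments imports structure (the Crow--Kimura form, strict convexity of the Perron root off the $\mathbf 1$-direction, the determinant-polynomial argument) that is not among the theorem's assumptions, and for the Eigen form you yourself note the argument stalls at exactly the open question of Remark~\ref{Rem_21}. So this extra machinery neither matches the hypotheses nor closes the gap in full generality.

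In short: under the stated assumptions the conclusion supported by the paper's proof (and by Rockafellar's corollary) is that the maximum is attained and that \emph{some} maximizer is an extreme point of~$\Pi$; the phrase ``any related maximizer'' in the statement should be read in that sense. Your first two paragraphs already deliver this via the same route as the paper, and the remainder of your proposal is an attempt to prove a genuinely stronger assertion than the hypotheses warrant.
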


\begin{proof}
It suffices to take Remark~\ref{Rem_18}, Assumption~\ref{Ass_19}, and \cite[Corollary~32.3.1]{Rockafellar1970}
into account.
\end{proof}

\begin{corollary}  \label{Cor_24}
If the conditions of Theorem~{\rm \ref{Thm_23}} hold and{\rm ,} moreover{\rm ,} $ \Pi $ is a compact convex polytope{\rm ,}
then any maximizer in the problem~{\rm (\ref{Eq_29})} is a vertex of~$ \Pi $.
\end{corollary}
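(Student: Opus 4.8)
The plan is to derive Corollary~\ref{Cor_24} directly from Theorem~\ref{Thm_23} by specializing the structure of $\Pi$. Under the corollary's hypotheses all assumptions of Theorem~\ref{Thm_23} are in force (a compact convex polytope is in particular a convex compact set, so Assumption~\ref{Ass_19} holds), and Theorem~\ref{Thm_23} already gives that the function~(\ref{Eq_30}) attains its maximum over $\Pi$ and that every maximizer is an extreme point of $\Pi$. Thus the entire content of the corollary reduces to the elementary geometric fact that the extreme points of a compact convex polytope are exactly its vertices.

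First I would recall the relevant definitions: a compact convex polytope can be described either as the convex hull of a finite set of points or, equivalently (by the Minkowski--Weyl theorem), as a bounded intersection of finitely many closed half-spaces; its vertices are the $0$-dimensional faces. I would then invoke the standard result from convex geometry (see, e.g., \cite[\S 18]{Rockafellar1970}) that for a polytope the set of extreme points coincides with the finite set of vertices, and in particular is finite and nonempty. Combining this with the conclusion of Theorem~\ref{Thm_23} — that any maximizer of~(\ref{Eq_30}) over $\Pi$ is an extreme point of $\Pi$ — yields immediately that any maximizer in the problem~(\ref{Eq_29}) is a vertex of $\Pi$.

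There is essentially no obstacle here: the corollary is a routine specialization. The only point worth a sentence is to make sure the reader sees that ``compact convex polytope'' entails the hypotheses of Theorem~\ref{Thm_23}, so that the corollary is not vacuous, and that the maximum is actually attained (which again follows from Theorem~\ref{Thm_23}, or alternatively from continuity of~(\ref{Eq_30}) on the nonempty compact set~$\Pi$ together with finiteness of the vertex set). Accordingly, the written proof can be a two-line deduction: cite Theorem~\ref{Thm_23}, then cite the characterization of extreme points of a polytope as its vertices.
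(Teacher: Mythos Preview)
Your proposal is correct and matches the paper's treatment: the paper states Corollary~\ref{Cor_24} without proof, implicitly relying on Theorem~\ref{Thm_23} together with the standard fact that the extreme points of a compact convex polytope are precisely its vertices. Your two-line deduction is exactly what is intended.
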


\subsection{Numerical simulations}
\label{Subsect_3_2}

The numerical simulation results in this subsection are purely illustrative and demonstrate a typical behavior of
the maximal steady-state population fitness with the increase of a mutation rate or a death rate.

We adopt the exponential form~(\ref{Eq_14}) of the growth saturation function~$ \varphi $ and consider
the dynamical system~(\ref{Eq_10}) with the following symmetric Crow--Kimura formalism~\cite{BaakeGabriel1999,BratusSemenovNovozhilov2018}:
\begin{itemize}
\item  $ n = 2^N $ genotypes are considered, they are labelled by $ \, 0, 1, \ldots, n - 1 \, $ and associated to
the $ N $-dimensional binary representations (sequences) of the corresponding numbers;
\item  $ G \, = \, M + \hat{M} $, $ \: M \, = \, \mathrm{diag} \, m $,
$ \: \hat{M} \, = \, \{ \hat{m}_{ij} \}_{i,j \, = \, 1}^n $,
\begin{equation}
\begin{aligned}
& \hat{m}_{i + 1, \, j + 1} \:\, = \:\, \begin{cases}
\mu, & H_{ij} = 1, \\
0, & H_{ij} > 1, \\
-N \mu, & H_{ij} = 0 \:\:\: (i = j),
\end{cases} \\
& i \, = \, \overline{0, n - 1}, \quad j \, = \, \overline{0, n - 1},
\end{aligned}  \label{Eq_25_4}
\end{equation}
where $ \mu > 0 $ is a mutation rate parameter and $ H_{ij} $ denotes the Hamming distance between sequences~$ i $ and $ j $
($ H_{ij} = 0 $ only if $ i = j $).
\end{itemize}
The irreducibility of such a mutation matrix~$ \hat{M} $ can be verified via the same reasonings as in the end of Remark~\ref{Rem_11}
for the permutation invariant Crow--Kimura model. Hence, Theorems~\ref{Thm_20}, \ref{Thm_23} and Corollary~\ref{Cor_24} can be
applied, and the steady-state fitness maximization problem for fitness landscapes constrained by a compact convex polytope~$ \Pi $
reduces to sorting through the vertices of $ \Pi $.

Here we do not use the permutation invariant Crow--Kimura formalism (see (\ref{Eq_5})) because of its high heterogeneity such that
many of the original $ 2^N $ binary sequences are concentrated in the middle of the ordered group of the $ N + 1 $ equivalence
classes. Thus, for attenuating the related heterogeneity in the steady-state fitness profile, one typically has to select
a rather specific structure of the death rate vector.

For simplicity, we consider the low-dimensional case
\begin{equation}
N = 3, \quad n = 2^N = 8.  \label{Eq_25_5}
\end{equation}
Moreover, we take
\begin{equation}
\gamma = 1  \label{Eq_25_6}
\end{equation}
as in (\ref{Eq_25_2}). The components of the death rate vector~$ d $ are independently and randomly generated from the uniform
distribution on $ (0, 1) $, and they are also arranged in ascending order:
\begin{equation}
\begin{aligned}
& d \:\, = \:\, (0.099122, \, 0.158373, \, 0.445851, \, 0.453362, \\
& \qquad\quad
0.484529, \, 0.488275, \, 0.580878, \, 0.990911)^{\top} \:\, \in \:\, (0, 1)^{8 \times 1}, \\
& D \: = \: \mathrm{diag} \, d.
\end{aligned}  \label{Eq_25_7}
\end{equation}
The constraint set~$ \Pi $ is chosen as the following simplex with center at the point
$ \: (0.5, 0.5, \ldots, 0.5)^{\top} \, \in \, \mathbb{R}^{n \times 1} $:
\begin{equation}
\begin{aligned}
\Pi \:\: = \:\: \left\{ m \: = \: (m_1, m_2, \ldots, m_n)^{\top} \: \colon \:
{}^{{}^{{}^{{}^{{}^{{}^{{}^{{}^{}}}}}}}}
m_i \, = \, 0.5 + \xi_i, \right. & \\
\left. \xi_i \geqslant 0, \:\:\: i = \overline{1, n}, \:\:\: \sum_{j = 1}^n \xi_j \: = \: 2 \right\}. &
\end{aligned}  \label{Eq_25_8}
\end{equation}
There are $ n = 8 $ vertices which we label by $ \: 1, 2, \ldots, n $: for every $ \: i \, \in \, \{ 1, 2, \ldots, n \}, \: $
the $ i $-th coordinate of vertex~$ i $ is $ 2.5 $, while all other coordinates of vertex~$ i $ are equal to $ 0.5 $.

Fig.~\ref{Fig_4} indicates how the steady-state fitnesses at vertices~$ 1,2,3 $ depend on the mutation rate parameter~$ \mu $.
The maximal steady-state fitness in the problem~(\ref{Eq_29}) is achieved at vertex~1. This is not surprising, since
the components of the death rate vector~(\ref{Eq_25_7}) are written in ascending order. Furthermore, when $ \mu $ is
sufficiently large, the difference between the steady-state fitnesses at the vertices of $ \Pi $ (not only 1--3, but also 4--8)
becomes negligible.

\begin{figure}
\centering
\includegraphics[width=0.48\textwidth]{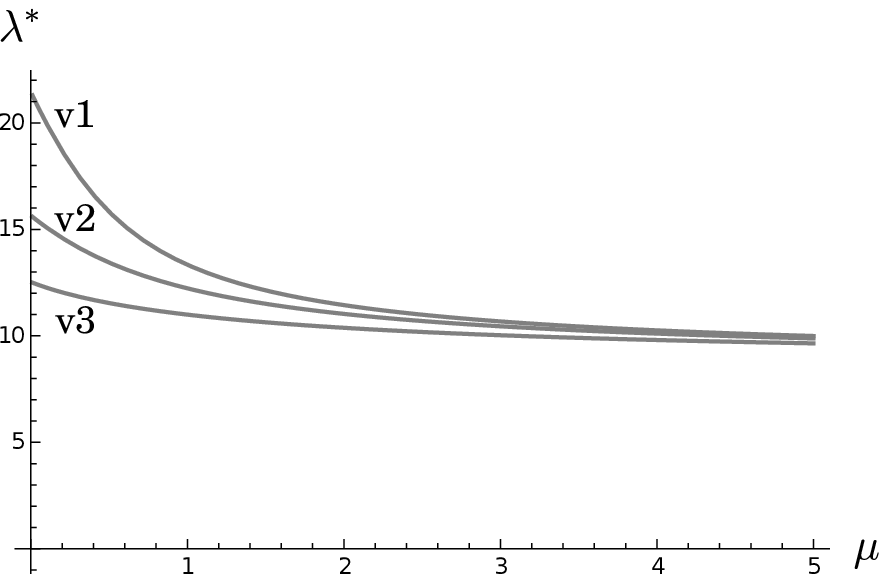}
\bf \caption{\rm Numerical simulation results in Subsection~\ref{Subsect_3_2} for the parameter values~(\ref{Eq_25_4})--(\ref{Eq_25_8}).
The curves from top to bottom correspond to the steady-state population fitnesses~$ \, \tilde{f}[u^*] = \lambda^* \, $ at the first,
second, and third vertices (denoted by v1,\,v2,\,v3) of the simplex~(\ref{Eq_25_8}), respectively. The figure shows the dependence of
these fitnesses on the mutation rate parameter~$ \mu $. The maximum in the problem~(\ref{Eq_29}) is achieved at the first vertex,
which seems natural, since the minimal component of the death rate vector~(\ref{Eq_25_7}) is the first one. For sufficiently large
$ \mu $, the difference between the steady-state fitnesses at the vertices of $ \Pi $ becomes negligible.}
\label{Fig_4}
\end{figure}

Next, we fix $ \mu $ and let the first component of the death rate vector be a variable~$ \Delta $:
\begin{equation}
\begin{aligned}
& \mu = 0.1, \\
& d \:\, = \:\, (\Delta, \, 0.158373, \, 0.445851, \, 0.453362, \\
& \qquad\quad
0.484529, \, 0.488275, \, 0.580878, \, 0.990911)^{\top} \:\, \in \:\, (0, +\infty)^{8 \times 1}.
\end{aligned}  \label{Eq_25_9}
\end{equation}
Fig.~\ref{Fig_5} illustrates how the steady-state fitnesses at vertices~1 and 2 depend on the death rate~$ d_1 = \Delta $ of
genotype~0 (our notation is such that, for every $ \: i \, \in \, \{ 1, 2, \ldots, n \}, \: $ the $ i $-th component of $ d $ is
the death rate of genotype~$ i - 1 $). The maximal steady-state fitness in the problem~(\ref{Eq_29}) is achieved at vertex~1 if
$ \, \Delta \leqslant d_2 = 0.158373 \, $ and at vertex~2 if $ \Delta \geqslant d_2 $.

\begin{figure}
\centering
\includegraphics[width=0.48\textwidth]{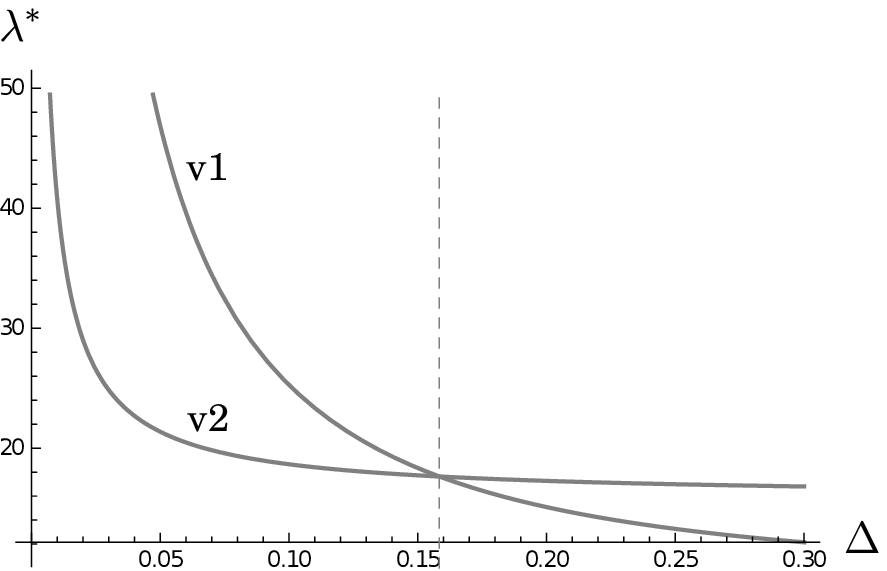}
\bf \caption{\rm Numerical simulation results in Subsection~\ref{Subsect_3_2} for the parameter values~(\ref{Eq_25_4})--(\ref{Eq_25_6}),
(\ref{Eq_25_8}), (\ref{Eq_25_9}). The curves from top to bottom correspond to the steady-state population
fitnesses~$ \, \tilde{f}[u^*] = \lambda^* \, $ at the first and second vertices (denoted by v1,\,v2) of the simplex~(\ref{Eq_25_8}),
respectively. The figure shows the dependence of these fitnesses on the death rate~$ d_1 = \Delta $. The maximum in the problem~(\ref{Eq_29})
is achieved at the first vertex if $ \, \Delta \leqslant d_2 = 0.158373 \, $ and at the second vertex if $ \Delta \geqslant d_2 $.}
\label{Fig_5}
\end{figure}

\section{Continuous-time distributed open quasispecies models}
\label{Sec_4}

\subsection{Constructing the dynamical equation}

Since the actual number of genotypes or alleles involved in a studied process can be extremely large (even though
a significant fraction of them may have small effects), distributed (continuum-of-alleles) quasispecies models
are of particular interest (see, e.\,g., \cite[Introduction to Chapter~IV]{Burger2000}). For the classical Eigen
and Crow--Kimura models, distributed extensions in the form of an integro-differential equation with respect to
a sought-after time-varying probability density were rigorously investigated in
\cite{Burger1986,Burger1988a,Burger1988b,Burger1996} and \cite[Chapter~IV]{Burger2000}, while the first
particular distributed model of this kind was proposed in \cite{CrowKimura1964,Kimura1965}. B\"urger's formalism
treats the distributed Eigen and Crow--Kimura settings together and allows to establish sufficient conditions for
the existence and uniqueness of an equilibrium probability density with its positivity and asymptotic stability
(see \cite[\S 3, \S 4]{Burger1988a} and \cite[\S IV.3]{Burger2000}).

Typical (although not the most general) integro-differential equations describing the Eigen and Crow--Kimura dynamics
can be written as
\begin{equation}
\begin{aligned}
& \frac{\partial p(x, t)}{\partial t} \:\: = \:\: \int\limits_{\Omega} K(x, y) \, m(y) \, p(y, t) \, \mathrm{d} y \:\, - \:\,
f[p(\cdot, t)] \, p(x, t), \\
& x \in \Omega, \quad t \geqslant 0,
\end{aligned}  \label{Eq_31}
\end{equation}
and
\begin{equation}
\begin{aligned}
& \frac{\partial p(x, t)}{\partial t} \:\: = \:\: m(x) \, p(x, t) \:\, + \:\,
\int\limits_{\Omega} K(x, y) \, \mu(y) \, p(y, t) \, \mathrm{d} y \\
& \qquad\qquad\qquad\qquad \:\:
 - \:\, \mu(x) \, p(x, t) \:\, - \:\, f[p(\cdot, t)] \, p(x, t), \\
& x \in \Omega, \quad t \geqslant 0,
\end{aligned}  \label{Eq_32}
\end{equation}
respectively, where the following notations are used:
\begin{itemize}
\item  $ t \geqslant 0 $ is a time variable;
\item  the considered genotypes (or allelic effects) are represented as points of a finite-dimensional region~$ \Omega $;
\item  $ p \, \colon \: \Omega \times [0, +\infty) \: \to \: [0, +\infty) \: $ is the function describing the relative
frequencies of the genotypes and normalized so that
\begin{equation}
\int\limits_{\Omega} p(x, t) \, \mathrm{d} x \,\, = \,\, 1 \quad \forall t \geqslant 0;  \label{Eq_33}
\end{equation}
\item  $ \Omega \, \ni \, x \: \longmapsto \: K(x, y) \, \in \, [0, +\infty) \: $ is the probability density related to
reproduction events such that an individual of genotype~$ y \in \Omega $ produces an individual of genotype~$ x \in \Omega $,
i.\,e., $ \: K \colon \, \Omega^2 \to [0, +\infty) \: $ is a distributed analog of the mutation matrix, and
\begin{equation}
\int\limits_{\Omega} K(x, y) \, \mathrm{d} x \,\, = \,\, 1 \quad \forall y \in \Omega;  \label{Eq_34}
\end{equation}
\item  $ m \colon \, \Omega \to \mathbb{R} \: $ is the distributed fitness landscape;
\item  $ f[p(\cdot, t)] \: = \: \int_{\Omega} m(x) \, p(x, t) \, \mathrm{d} x \: $ is the mean population fitness;
\item  $ \mu \colon \, \Omega \, \to \, [0, +\infty) \: $ is the mutation rate function.
\end{itemize}

For simplicity, we impose some more specific conditions than in \cite[\S 3, \S 4]{Burger1988a} and \cite[\S IV.3]{Burger2000}.
In particular, it is reasonable to consider a bounded region~$ \Omega $ from a computational perspective.

\begin{assumption}  \label{Ass_25}
$ \Omega $ is a bounded open domain in $ \mathbb{R}^{\varkappa} $ or the closure of such a domain{\rm ,}
$ \varkappa \in \mathbb{N} ${\rm ,} the functions $ \: K \colon \, \Omega^2 \to [0, +\infty) ${\rm ,}
$ \: m \colon \, \Omega \to \mathbb{R}, \: $ and $ \: \mu \colon \, \Omega \, \to \, [0, +\infty) \: $ are bounded{\rm ,}
and {\rm (\ref{Eq_34})} holds.
\end{assumption}

One can consider (\ref{Eq_31}) and (\ref{Eq_32}) as differential equations whose state space is the Banach space
\begin{equation}
V \, = \, L^1(\Omega; \mathbb{R})  \label{Eq_35}
\end{equation}
of all Lebesgue measurable functions $ \: v \colon \, \Omega \to \mathbb{R} \: $ for which the Lebesgue integral
$ \: \int_{\Omega} |v(x)| \, \mathrm{d} x \: $ exists and is finite (the theory of differential equations in Banach spaces is
introduced, e.\,g., in \cite{DaleckiiKrein1974}).

The relations~$ \, p(x, t) \geqslant 0 \, $ and (\ref{Eq_33}) are preserved along the solutions of (\ref{Eq_31}) and (\ref{Eq_32}).

Both equations~(\ref{Eq_31}) and (\ref{Eq_32}) can be written in the common form
\begin{equation}
\begin{aligned}
& \frac{\partial p(x, t)}{\partial t} \:\: = \:\: \int\limits_{\Omega} G(x, y) \, p(y, t) \, \mathrm{d} y \:\, + \:\,
b(x) \, p(x, t) \:\, - \:\, f[p(\cdot, t)] \, p(x, t), \\
& x \in \Omega, \quad t \geqslant 0,
\end{aligned}  \label{Eq_36}
\end{equation}
with the functions $ \: G \colon \, \Omega^2 \to \mathbb{R} \: $ and $ \: b \colon \, \Omega \to \mathbb{R} \: $ given by
\begin{equation}
\begin{aligned}
G(x, y) \:\, = \:\, K(x, y) \, m(y) \quad \mbox{and} \quad b(x) \:\, = \:\, 0 & \\
\mbox{for the Eigen model~(\ref{Eq_31}),} &
\end{aligned}  \label{Eq_37}
\end{equation}
or
\begin{equation}
\begin{aligned}
G(x, y) \:\, = \:\, K(x, y) \, \mu(y) \quad \mbox{and} \quad b(x) \:\, = \:\, m(x) \, - \, \mu(x) & \\
\mbox{for the Crow--Kimura model~(\ref{Eq_32}).} &
\end{aligned}  \label{Eq_38}
\end{equation}

Note that a solution of (\ref{Eq_36}) with $ \: \int_{\Omega} p(x, 0) \, \mathrm{d} x \, = \, 1 \: $ can be represented as
$$
p(x, t) \:\, = \:\, \frac{r(x, t)}{\int_{\Omega} r(y, t) \, \mathrm{d} y} \quad
\forall x \in \Omega \quad \forall t \geqslant 0,
$$
where
$$
r(x, t) \:\, = \:\, c \: \exp \left\{ \int\limits_0^t f[p(\cdot, \tau)] \, \mathrm{d} \tau \right\} \, p(x, t) \quad
\forall x \in \Omega \quad \forall t \geqslant 0
$$
is the unique solution of
\begin{equation}
\frac{\partial r(x, t)}{\partial t} \:\: = \:\: \int\limits_{\Omega} G(x, y) \, r(y, t) \, \mathrm{d} y \:\, + \:\,
b(x) \, r(x, t), \quad x \in \Omega, \quad t \geqslant 0,  \label{Eq_39}
\end{equation}
$$
r(x, 0) \: = \: c \: p(x, 0), \quad x \in \Omega,
$$
with an arbitrary constant $ \: c \, \in \, \mathbb{R} \setminus \{ 0 \}, \: $ i.\,e., the values of $ r $ may be
interpreted as some quantities whose normalization leads to the values of the time-varying probability density~$ p $.

We therefore proceed from the system~(\ref{Eq_39}) and incorporate growth and mortality characteristics in it
(recall similar considerations in Subsection~\ref{Subsect_2_2}). Let
$ \: u \, \colon \: \Omega \times [0, +\infty) \: \to \: [0, +\infty) \: $ be a function specifying the sought-after
dynamical quantities associated with the considered genotypes. Introduce also the death rate function
$ \: d \colon \, \Omega \, \to \, [0, +\infty) \: $ and the growth saturation term
$ \: \varphi \left( \int_{\Omega} u(x, t) \, \mathrm{d} x \right) $.

\begin{assumption}  \label{Ass_26}
$ d \colon \, \Omega \, \to \, [0, +\infty) \: $ is bounded{\rm ,} $ \: \varphi \colon \, \mathbb{R} \to [0, +\infty) \: $
fulfills the same properties as mentioned in Assumption~{\rm \ref{Ass_1},} $ \: G \colon \, \Omega^2 \to \mathbb{R} \: $ and
$ \: b \colon \, \Omega \to \mathbb{R} \: $ are defined by either {\rm (\ref{Eq_37})} or {\rm (\ref{Eq_38}),} and{\rm ,}
moreover{\rm ,}
$$
G(x, y) \, \geqslant \, 0 \quad \forall x \in \Omega \quad \forall y \in \Omega.
$$
\end{assumption}

Thus, we arrive at the integro-differential equation
\begin{equation}
\begin{aligned}
& \frac{\partial u(x, t)}{\partial t} \:\: = \:\: \varphi \left( \int\limits_{\Omega} u(y, t) \, \mathrm{d} y \right) \,
\left( \int\limits_{\Omega} G(x, y) \, u(y, t) \, \mathrm{d} y \:\, + \:\, b(x) \, u(x, t) \right) \\
& \qquad\qquad\quad \:\:
- \:\, d(x) \, u(x, t), \\
& x \in \Omega, \quad t \geqslant 0.
\end{aligned}  \label{Eq_40}
\end{equation}
Consider it as a differential equation with the state space~(\ref{Eq_35}), and set the initial condition as
\begin{equation}
u(\cdot, 0) \: = \: u_0(\cdot) \: \in \: V_+,  \label{Eq_41}
\end{equation}
where
\begin{equation}
\begin{aligned}
V_+ \:\: = \:\: \{ v \in V \: \colon \:\, & \mbox{$ v(x) \geqslant 0 \, $ for almost all $ x \in \Omega $} \\
& \mbox{with respect to Lebesgue measure in $ \, \mathbb{R}^{\varkappa} \supset \Omega $} \}.
\end{aligned}  \label{Eq_42}
\end{equation}
The zero element in $ V $ is a trivial steady state of (\ref{Eq_40}).

For any real Banach spaces~$ W_1 $ and $ W_2 $, let $ \mathcal{L}(W_1, W_2) $ denote the set of all bounded linear
operators acting from $ W_1 $ to $ W_2 $.

It is convenient to rewrite the right-hand side of (\ref{Eq_40}) by means of the following operator notation:
\begin{equation}
\begin{aligned}
& \mathcal{G} \, \in \, \mathcal{L}(V, V), \quad \psi \, \in \, \mathcal{L}(V, \mathbb{R}), \\
& \mathcal{B} \, \in \, \mathcal{L}(V, V), \quad \mathcal{D} \, \in \, \mathcal{L}(V, V), \quad
\mathcal{F} \, \in \, \mathcal{L}(V, V), \\
& \mathcal{G}[v](x) \:\, = \:\, \int\limits_{\Omega} G(x, y) \, v(y) \, \mathrm{d} y, \quad
\psi[v] \:\, = \:\, \int\limits_{\Omega} v(y) \, \mathrm{d} y, \\
& \mathcal{B}[v](x) \:\, = \:\, b(x) \, v(x), \quad \mathcal{D}[v](x) \:\, = \:\, d(x) \, v(x), \\
& \mathcal{F}[v](x) \:\: = \:\: \varphi(\psi[v]) \: (\mathcal{G}[v](x) \: + \: \mathcal{B}[v](x)) \,\, - \,\,
\mathcal{D}[v](x) \\
& \forall x \in \Omega \quad \forall v \in V.
\end{aligned}  \label{Eq_43}
\end{equation}

\begin{theorem}  \label{Thm_27}
Under Assumptions~{\rm \ref{Ass_25}} and {\rm \ref{Ass_26},} the Fr\'echet derivative (Jacobian operator)
$ \: \mathrm{D} \mathcal{F} \, \colon \: V \, \to \, \mathcal{L}(V, V) \: $ of the right-hand side
operator~$ \mathcal{F} $ is represented as
\begin{equation}
\mathrm{D} \mathcal{F} [v] \:\, = \:\, \varphi(\psi[v]) \: (\mathcal{G} \, + \, \mathcal{B}) \,\, - \,\,
\mathcal{D} \,\, + \,\, \mathcal{E}_v \quad \forall v \in V,  \label{Eq_44}
\end{equation}
where $ \, \mathcal{E}_v \, \in \, \mathcal{L}(V, V) \, $ is defined by
\begin{equation}
\mathcal{E}_v[h] \:\, = \:\, \varphi'(\psi[v]) \,\, \psi[h] \,\, (\mathcal{G}[v] \, + \, \mathcal{B}[v]) \quad
\forall h \in V  \label{Eq_45}
\end{equation}
for any $ v \in V $.
\end{theorem}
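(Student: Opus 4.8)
The plan is to establish this by a direct computation from the definition of the Fr\'echet derivative, since the right-hand side operator~$ \mathcal{F} $ in~(\ref{Eq_43}) is assembled from bounded linear operators and the scalar $ C^1 $ function~$ \varphi $ by composition and multiplication, so no fixed-point or implicit-function argument is needed.

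First I would record that, under Assumptions~\ref{Ass_25} and~\ref{Ass_26}, the objects in~(\ref{Eq_43}) genuinely lie in the indicated spaces. The multiplication operators $ \mathcal{B} $, $ \mathcal{D} $ are bounded on $ V = L^1(\Omega; \mathbb{R}) $ with norms at most $ \| b \|_{\infty} $, $ \| d \|_{\infty} $, and the functional $ \psi $ satisfies $ |\psi[h]| \leqslant \| h \|_V $; for the integral operator, Tonelli's theorem gives
$$
\| \mathcal{G}[v] \|_V \,\, \leqslant \,\, \int\limits_{\Omega} \int\limits_{\Omega} |G(x, y)| \, |v(y)| \, \mathrm{d} y \, \mathrm{d} x
\,\, \leqslant \,\, \Bigl( \sup_{y \in \Omega} \int\limits_{\Omega} |G(x, y)| \, \mathrm{d} x \Bigr) \, \| v \|_V \, ,
$$
and the supremum is finite because $ G $ is bounded and $ \Omega $ has finite Lebesgue measure. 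I would also note that, for each fixed $ v $, the map $ h \mapsto \varphi'(\psi[v]) \, \psi[h] \, (\mathcal{G}[v] + \mathcal{B}[v]) $ defining $ \mathcal{E}_v $ in~(\ref{Eq_45}) is a rank-one bounded operator, so $ \mathcal{E}_v \in \mathcal{L}(V, V) $.

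The core step is the increment expansion. Fixing $ v, h \in V $, I would use linearity of $ \psi $, $ \mathcal{G} $, $ \mathcal{B} $, $ \mathcal{D} $ together with the first-order Taylor formula for $ \varphi \in C^1(\mathbb{R}) $ at the point $ \psi[v] $, namely $ \varphi(\psi[v] + \psi[h]) = \varphi(\psi[v]) + \varphi'(\psi[v]) \, \psi[h] + \omega_v(\psi[h]) $ with $ |\omega_v(t)| / |t| \to 0 $ as $ t \to 0 $; combined with $ |\psi[h]| \leqslant \| h \|_V $ this yields $ \omega_v(\psi[h]) = o(\| h \|_V) $. Substituting $ \psi[v + h] = \psi[v] + \psi[h] $, $ \mathcal{G}[v + h] = \mathcal{G}[v] + \mathcal{G}[h] $, etc.\ into~(\ref{Eq_43}) and multiplying out, the difference $ \mathcal{F}[v + h] - \mathcal{F}[v] $ decomposes into $ \bigl( \varphi(\psi[v]) (\mathcal{G} + \mathcal{B}) - \mathcal{D} \bigr)[h] $, plus $ \mathcal{E}_v[h] $, plus the three remainder terms $ \omega_v(\psi[h]) (\mathcal{G}[v] + \mathcal{B}[v]) $, $ \varphi'(\psi[v]) \, \psi[h] \, (\mathcal{G}[h] + \mathcal{B}[h]) $, and $ \omega_v(\psi[h]) (\mathcal{G}[h] + \mathcal{B}[h]) $.

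It then remains to check that the three remainder terms sum to something of $ V $-norm $ o(\| h \|_V) $: the first is a fixed vector times the scalar $ \omega_v(\psi[h]) = o(\| h \|_V) $; the second is $ O(\| h \|_V^2) $ by the Step-1 bounds on $ \mathcal{G} $, $ \mathcal{B} $, $ \psi $; the third is of still higher order. This gives Fr\'echet differentiability of $ \mathcal{F} $ at $ v $ with $ \mathrm{D} \mathcal{F}[v] = \varphi(\psi[v])(\mathcal{G} + \mathcal{B}) - \mathcal{D} + \mathcal{E}_v $, which is exactly~(\ref{Eq_44})--(\ref{Eq_45}); continuity of $ v \mapsto \mathrm{D} \mathcal{F}[v] $ from $ V $ into $ \mathcal{L}(V, V) $, hence $ \mathcal{F} \in C^1 $, would follow from continuity of $ \varphi $, $ \varphi' $ and the same bounds, though it is not part of the statement. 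I do not expect a serious obstacle: the main points deserving care are the justification that $ \mathcal{G} \in \mathcal{L}(V, V) $ on $ L^1 $ (finiteness of $ |\Omega| $ and boundedness of $ G $) and the verification that every cross term in the expansion is $ o(\| h \|_V) $ \emph{uniformly} over directions $ h $, not merely pointwise.
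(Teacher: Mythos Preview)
Your proposal is correct and follows essentially the same route as the paper's proof: both use the first-order Taylor expansion of $\varphi$ at the scalar $\psi[v]$, substitute $\psi[v+h]=\psi[v]+\psi[h]$ and the linearity of $\mathcal{G},\mathcal{B},\mathcal{D}$ into $\mathcal{F}[v+h]$, and read off the linear part as $\varphi(\psi[v])(\mathcal{G}+\mathcal{B})-\mathcal{D}+\mathcal{E}_v$. Your version is more explicit than the paper's --- you justify $\mathcal{G}\in\mathcal{L}(V,V)$ via the finiteness of $|\Omega|$ and boundedness of $G$, and you itemize the three remainder terms and their orders, whereas the paper simply absorbs everything into a single $o_2(h)$ --- but there is no substantive difference in method.
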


\begin{theorem}  \label{Thm_28}
Let Assumptions~{\rm \ref{Ass_25}} and {\rm \ref{Ass_26}} hold. Then the subset~$ V_+ $ of the state space~$ V $ is
positively invariant with respect to the dynamical system~{\rm (\ref{Eq_40})}. Moreover{\rm ,} for any initial
state~{\rm (\ref{Eq_41}),} there exists a unique solution of {\rm (\ref{Eq_40})} that is defined for all
$ t \geqslant 0 $.
\end{theorem}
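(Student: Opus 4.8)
The plan is to treat the integro-differential equation~(\ref{Eq_40}) as an abstract ODE $\dot u = \mathcal{F}[u]$ in the Banach space $V = L^1(\Omega;\mathbb{R})$ and to apply the standard theory of differential equations in Banach spaces (as in \cite{DaleckiiKrein1974}), mirroring the finite-dimensional argument used for Theorem~\ref{Thm_2}. First I would check that the right-hand side operator $\mathcal{F}$ defined in~(\ref{Eq_43}) is locally Lipschitz on $V$. The operators $\mathcal{G}$, $\mathcal{B}$, $\mathcal{D}$, and the functional $\psi$ are bounded and linear by Assumptions~\ref{Ass_25} and~\ref{Ass_26} (boundedness of $K$, $m$, $\mu$, $b$, $d$ gives the required operator norm bounds, using $\|\mathcal{G}[v]\|_{L^1}\le \|G\|_\infty |\Omega| \|v\|_{L^1}$ etc.); the only nonlinearity enters through $\varphi(\psi[\cdot])$, and since $\varphi$ is $C^1$ and $\psi$ is bounded linear, the composition $v\mapsto \varphi(\psi[v])$ is locally Lipschitz and in fact $C^1$ with derivative computed in Theorem~\ref{Thm_27}. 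Hence $\mathcal{F}$ is $C^1$, which by Picard--Lindelöf in Banach spaces yields, for any initial datum in $V$, a unique maximal solution $u(\cdot,t)$ on a maximal interval $[0,T_{\max})$.

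Next I would prove positive invariance of the cone $V_+$ defined in~(\ref{Eq_42}). The idea is the same quasi-positivity argument as in the ODE case: rewrite~(\ref{Eq_40}) along a solution by splitting off the ``diagonal decay'' part. Fix a solution starting in $V_+$ and let $c(t) = \varphi(\psi[u(\cdot,t)])\,b(\cdot) - d(\cdot)$, a bounded (in $L^\infty$) time-dependent multiplication coefficient; then
\begin{equation}
\frac{\partial}{\partial t}\Bigl(e^{-\int_0^t c(x,\tau)\,\mathrm{d}\tau}\,u(x,t)\Bigr)
= e^{-\int_0^t c(x,\tau)\,\mathrm{d}\tau}\,\varphi(\psi[u(\cdot,t)])\int_\Omega G(x,y)\,u(y,t)\,\mathrm{d}y. \nonumber
\end{equation}
Because $\varphi\ge 0$, $G\ge 0$ (Assumption~\ref{Ass_26}), and $u(\cdot,t)\ge 0$ on the interval where positivity holds, the right-hand side is $\ge 0$; integrating in $t$ from $0$ shows that $e^{-\int_0^t c\,\mathrm{d}\tau}u(x,t)$ is nondecreasing in $t$ and starts nonnegative, so $u(x,t)\ge 0$ for a.e.\ $x$ as long as the solution exists. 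Making this rigorous requires a short bootstrap/Gronwall argument (the standard way is to show $\|u^-(\cdot,t)\|_{L^1}$, the $L^1$-norm of the negative part, satisfies a linear differential inequality $\frac{d}{dt}\|u^-\|_{L^1}\le C\|u^-\|_{L^1}$ and vanishes at $t=0$, hence stays zero), which I would carry out in the appendix.

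Finally I would upgrade the local solution to a global one by ruling out finite-time blow-up of the $L^1$-norm. For a nonnegative solution, $\psi[u(\cdot,t)] = \int_\Omega u(x,t)\,\mathrm{d}x = \|u(\cdot,t)\|_{L^1}$, and integrating~(\ref{Eq_40}) over $\Omega$ and using~(\ref{Eq_34}) (which for the Eigen case gives $\int_\Omega G(x,y)\,\mathrm{d}x = m(y)$ and for the Crow--Kimura case $\int_\Omega G(x,y)\,\mathrm{d}x = \mu(y)$, combining with $b$) yields a scalar differential inequality of the form $\frac{d}{dt}\psi[u] \le \varphi(\psi[u])\cdot C_1\,\psi[u] - d_{\inf}\,\psi[u] \le C_1\sup_{s\ge0}(s\varphi(s)) =: C$, using the boundedness of $s\mapsto s\varphi(s)$ from Assumption~\ref{Ass_1} (inherited via Assumption~\ref{Ass_26}) — exactly as in the proof of Theorem~\ref{Thm_5}. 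This bounds $\|u(\cdot,t)\|_{L^1}$ on any finite interval, and since $\mathcal{F}$ maps bounded sets to bounded sets (all constituent operators are bounded and $\varphi$ is bounded on the relevant range), the maximal solution cannot escape to infinity in finite time, so $T_{\max} = +\infty$. The main obstacle is the positivity step: unlike in $\mathbb{R}^n$, one cannot argue componentwise, so one must work with the negative-part functional and justify differentiating $\|u^-(\cdot,t)\|_{L^1}$ along the $L^1$-valued trajectory; I expect this to be the most technical part of the argument, everything else being a routine adaptation of the finite-dimensional proofs already given in the appendix.
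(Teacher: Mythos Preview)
Your positivity argument is exactly the paper's: the same integrating-factor transformation shows that $e^{-\int_0^t c(x,\tau)\,\mathrm{d}\tau}u(x,t)$ has nonnegative time derivative, using $\varphi\ge 0$ and $G\ge 0$. (The paper glosses over the rigorization you flag, simply remarking that the transformation ``still makes sense'' when (\ref{Eq_40}) is viewed as a $V$-valued ODE; your negative-part idea is a reasonable way to make it precise.)

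For global existence you argue via local existence, positivity, an a priori bound on $\psi[u(\cdot,t)]$, and continuation. The paper takes a slightly shorter route: once positivity is in hand, it observes that $\mathrm{D}\mathcal{F}[v]$ is \emph{uniformly} bounded in operator norm over all of $V_+$, so $\mathcal{F}$ is globally Lipschitz on the closed convex set $V_+$, and global existence follows directly from the contraction-mapping argument of \cite[Theorem~VII.1.2]{DaleckiiKrein1974} without any separate continuation step or a priori estimate. The uniform bound on $\mathrm{D}\mathcal{F}$ hinges on the identity $\psi[v]=\|v\|_V$ for $v\in V_+$ (specific to $V=L^1$), which converts the hypotheses that $s\varphi(s)$ and $s\varphi'(s)$ are bounded into uniform bounds on $\varphi(\psi[v])$ and on $\|\mathcal{E}_v\|$ via (\ref{Eq_45}). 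Your route is also correct; the paper's just avoids the scalar a priori estimate here, deferring it to Theorem~\ref{Thm_30}.
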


Theorems~\ref{Thm_27} and \ref{Thm_28} are proved in Appendix.

The next condition is imposed in particular to guarantee the boundedness of the total population size function
\begin{equation}
s(t) \,\, = \,\, \int\limits_{\Omega} u(x, t) \, dx \quad \forall t \geqslant 0  \label{Eq_46}
\end{equation}
along the solution of (\ref{Eq_40}),~(\ref{Eq_41}).

\begin{assumption}  \label{Ass_29}
There exists a constant~$ d_{\mathrm{low}} > 0 $ such that $ \, d(x) \geqslant d_{\mathrm{low}} \, $ for all
$ x \in \Omega $.
\end{assumption}

\begin{theorem}  \label{Thm_30}
Under Assumptions~{\rm \ref{Ass_25}, \ref{Ass_26},} and {\rm \ref{Ass_29},} the total population size
function~{\rm (\ref{Eq_46})} is bounded along the solution of {\rm (\ref{Eq_40}),~(\ref{Eq_41})}.
\end{theorem}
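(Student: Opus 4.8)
The plan is to mimic the finite-dimensional argument behind Theorem~\ref{Thm_5}: derive a scalar differential inequality for $s(t) = \int_{\Omega} u(x,t)\,dx$ and conclude boundedness from it. First I would note that, by Theorem~\ref{Thm_28}, the unique solution of (\ref{Eq_40}),~(\ref{Eq_41}) stays in $V_+$, so $u(x,t) \geqslant 0$ for almost all $x \in \Omega$ and all $t \geqslant 0$, and in particular $s(t) \geqslant 0$. Integrating (\ref{Eq_40}) over $\Omega$ and using Fubini's theorem (justified by Assumption~\ref{Ass_25}, which makes $G$, $m$, $\mu$ bounded on the bounded region $\Omega$, so all integrands are integrable) gives
\begin{equation}
\dot{s}(t) \:\, = \:\, \varphi(s(t)) \, \left( \int\limits_{\Omega} \int\limits_{\Omega} G(x,y) \, u(y,t) \, dy \, dx \:\, + \:\, \int\limits_{\Omega} b(x) \, u(x,t) \, dx \right) \,\, - \,\, \int\limits_{\Omega} d(x) \, u(x,t) \, dx. \label{Eq_Thm30_sdot}
\end{equation}

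Next I would bound the bracketed growth contribution from above by $C\,s(t)$ for a suitable constant~$C$: swapping the order of integration in the double integral turns it into $\int_{\Omega} \big( \int_{\Omega} G(x,y)\,dx \big) u(y,t)\,dy$, and since $\Omega$ is bounded and $G$ is bounded (Assumption~\ref{Ass_26}, which also gives $G \geqslant 0$), the inner integral is bounded by some constant; combined with the boundedness of $b$ (which follows from boundedness of $m$ and $\mu$ under Assumption~\ref{Ass_25}) and the nonnegativity of $u$, we get $\int_{\Omega}\int_{\Omega} G\,u\,dy\,dx + \int_{\Omega} b\,u\,dx \leqslant C\,s(t)$. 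For the mortality term, Assumption~\ref{Ass_29} gives $\int_{\Omega} d(x)\,u(x,t)\,dx \geqslant d_{\mathrm{low}}\,s(t)$. Hence, using $\varphi \geqslant 0$ and $\varphi(s(t)) \geqslant 0$,
\begin{equation}
\dot{s}(t) \:\, \leqslant \:\, C \, s(t) \, \varphi(s(t)) \,\, - \,\, d_{\mathrm{low}} \, s(t) \quad \forall t \geqslant 0. \label{Eq_Thm30_ineq}
\end{equation}
Now Assumption~\ref{Ass_1} (invoked through Assumption~\ref{Ass_26}) guarantees that $s \mapsto s\,\varphi(s)$ is bounded on $[0,+\infty)$, say by $\Phi$; therefore $\dot{s}(t) \leqslant C\Phi - d_{\mathrm{low}}\,s(t)$, a linear differential inequality whose right-hand side is negative once $s(t) > C\Phi / d_{\mathrm{low}}$. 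A standard comparison argument (Gr\"onwall-type) then yields $s(t) \leqslant \max\{ s(0), \, C\Phi / d_{\mathrm{low}} \}$ for all $t \geqslant 0$, which is the desired bound.

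The only delicate point is the justification of (\ref{Eq_Thm30_sdot})—namely that one may differentiate $s(t)$ under the integral sign and that the Fubini interchange is legitimate for the solution in the Banach space $V = L^1(\Omega;\mathbb{R})$. This is handled by the fact that the solution furnished by Theorem~\ref{Thm_28} is $C^1$ in time with values in $V$, that $\psi = \int_{\Omega}(\cdot)\,dx$ is a bounded linear functional on $V$ (so $s(t) = \psi[u(\cdot,t)]$ is differentiable with $\dot{s}(t) = \psi[\partial_t u(\cdot,t)]$), and that $\partial_t u(\cdot,t)$ equals the right-hand side of (\ref{Eq_40}), which lies in $V$ by Assumption~\ref{Ass_25}; applying $\psi$ to it and using boundedness of all kernels gives (\ref{Eq_Thm30_sdot}) with the double integral finite, so Fubini applies. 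Everything else is routine scalar ODE comparison.
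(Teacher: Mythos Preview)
Your proof is correct and follows essentially the same route as the paper's, which explicitly instructs one to mimic the proof of Theorem~\ref{Thm_5} with sums replaced by integrals and $m_{\max}$, $d_{\min}$ replaced by $\mathrm{ess}\,\sup_{x\in\Omega} m(x)$, $d_{\mathrm{low}}$. The only cosmetic difference is that the paper uses the normalization~(\ref{Eq_34}) together with (\ref{Eq_37}) or (\ref{Eq_38}) to reduce the bracketed growth term exactly to $\int_\Omega m(y)\,u(y,t)\,\mathrm{d}y$, yielding the sharper constant $\mathrm{ess}\,\sup_{x\in\Omega} m(x)$ in place of your generic~$C$; this changes nothing in the argument.
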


Theorem~\ref{Thm_30} can be proved similarly to Theorem~\ref{Thm_5}. Compared to the proof of the latter (given in Appendix),
sums over indices $ \, 1, 2, \ldots, n \, $ should now be replaced with integrals over $ \Omega $, while $ m_{\max} $ and
$ d_{\min} $ should be replaced with $ \: \mathrm{ess} \, \sup_{x \, \in \, \Omega} \, m(x) \: $ and $ d_{\mathrm{low}} $,
respectively.

\subsection{Steady-state analysis}

For investigating a nontrivial steady state of the integro-differential equation~(\ref{Eq_40}), additional conditions
have to be adopted.

\begin{assumption}  \label{Ass_31}
The following properties hold{\rm :}
\begin{list}{\rm \arabic{count})}%
{\usecounter{count}}
\item  the function~$ \varphi $ satisfies the same conditions as mentioned in Assumption~{\rm \ref{Ass_6};}
\item  for every measurable set $ \Omega_1 \subset \Omega $ such that the Lebesgue measures of both
$ \Omega_1 $ and $ \Omega \setminus \Omega_1 $ are positive{\rm ,} one has
$$
\int\limits_{\Omega \, \setminus \, \Omega_1} \, \int\limits_{\Omega_1} \frac{G(x, y)}{d(x)} \: \mathrm{d} x \, \mathrm{d} y
\:\, > \:\, 0;
$$
\item  in the Eigen case~{\rm (\ref{Eq_37}),} there exists a subset~$ \hat{\Omega} \subseteq \Omega $ of positive
Lebesgue measure and a constant~$ \hat{c} > 0 $ such that
\begin{equation}
\frac{G(x, y)}{d(x)} \: \geqslant \: \hat{c} \quad \forall \: (x, y) \, \in \, \hat{\Omega}^2;  \label{Eq_47}
\end{equation}
\item  in the Crow--Kimura case~{\rm (\ref{Eq_38}),} there exists a subset~$ \hat{\Omega} \subseteq \Omega $ of positive
Lebesgue measure and a constant~$ \hat{c} > 0 $ such that {\rm (\ref{Eq_47})} holds together with
$$
\eta \:\, = \:\, \mathrm{ess} \, \sup_{x \, \in \, \Omega} \, \frac{m(x) \, - \, \mu(x)}{d(x)} \:\, = \:\,
\mathrm{ess} \, \sup_{x \, \in \, \hat{\Omega}} \, \frac{m(x) \, - \, \mu(x)}{d(x)}
$$
and
\begin{equation}
\hat{c} \,\, \int\limits_{\hat{\Omega}} \frac{\mathrm{d} x}{\eta \: - \: \frac{m(x) \, - \, \mu(x)}{d(x)}} \:\, > \:\, 1,
\label{Eq_48}
\end{equation}
where divergence of the integral is allowed.
\end{list}
\end{assumption}

\begin{remark}  \rm \label{Rem_32}
Let Assumptions~\ref{Ass_25}, \ref{Ass_26}, and \ref{Ass_29} hold. Then Item~2 of Assumption~\ref{Ass_31} holds if, e.\,g.,
$ G $ is positive on $ \Omega^2 $. Item~2 means that the compact operator $ \, \mathcal{G}_1 \, \in \, \mathcal{L}(V, V) \, $
given by
\begin{equation}
\begin{aligned}
& \mathcal{G}_1 \: = \: \mathcal{D}^{-1} \mathcal{G}, \\
& \mathcal{G}_1[v](x) \:\, = \:\, \frac{1}{d(x)} \: \int\limits_{\Omega} G(x, y) \, v(y) \, \mathrm{d} y \quad
\forall x \in \Omega \quad \forall v \in V
\end{aligned}  \label{Eq_49}
\end{equation}
is irreducible (see \cite[Proposition~3.1]{Burger1988a}). Regarding Item~4, the integral in (\ref{Eq_48}) diverges if, e.\,g.,
there exist a point~$ \hat{x} \in \hat{\Omega} $ and a constant~$ c_1 > 0 $ such that
$$
\frac{m(x) \, - \, \mu(x)}{d(x)} \:\, \geqslant \:\, \eta \: - \: c_1 \, \| x - \hat{x} \| \quad \forall x \in \hat{\Omega}.
$$
\qed
\end{remark}

A nontrivial steady state~$ u^* \in V_+ $ of the integro-differential equation~(\ref{Eq_40}) satisfies
\begin{equation}
\mathcal{G}_1 [u^*] \: + \: \mathcal{B}_1 [u^*] \:\, = \:\, \frac{1}{\varphi(s^*)} \, u^*, \quad
s^* \: = \: \int\limits_{\Omega} u^*(x) \, \mathrm{d} x \: > \: 0,  \label{Eq_50}
\end{equation}
where the operator notations~(\ref{Eq_49}) and
\begin{equation}
\begin{aligned}
& \mathcal{B}_1 \: = \: \mathcal{D}^{-1} \mathcal{B}, \\
& \mathcal{B}_1[v](x) \,\, = \,\, \frac{b(x)}{d(x)} \: v(x) \quad \forall x \in \Omega \quad \forall v \in V
\end{aligned}  \label{Eq_51}
\end{equation}
are used. We hence arrive at the problem of finding a positive real eigenvalue~$ \lambda^* $ of
the operator $ \: \mathcal{G}_1 + \mathcal{B}_1 \, = \, \mathcal{D}^{-1} (\mathcal{G} + \mathcal{B}) \: $
with a related eigenfunction~$ u^* \in V_+ $ such that
\begin{equation}
\varphi(s^*) \: = \: \frac{1}{\lambda^*} \: \in \: (l_{\varphi}, \varphi(0)), \quad
s^* \: = \: \int\limits_{\Omega} u^*(x) \, \mathrm{d} x \: = \: \varphi^{-1} \left( \frac{1}{\lambda^*} \right) \: > \: 0
\label{Eq_52}
\end{equation}
(recall similar reasonings in Subsection~\ref{Subsect_2_3} with the formulas~(\ref{Eq_17}) and (\ref{Eq_18})).
For such $ \lambda^* $ and $ u^* $, the relations $ \: \mathcal{G} [u^*] \, = \, \lambda^* \, \mathcal{D} [u^*] \: $
and (\ref{Eq_34}), (\ref{Eq_37}), (\ref{Eq_38}), (\ref{Eq_43}) lead to
\begin{equation}
\lambda^* \: = \: \frac{\int_{\Omega} m(x) \, u^*(x) \, \mathrm{d} x}{\int_{\Omega} d(x) \, u^*(x) \, \mathrm{d} x}
\label{Eq_53}
\end{equation}
(recall also (\ref{Eq_19})).

Similarly to Definition~\ref{Def_8}, we specify the fitness function as follows.

\begin{definition}  \label{Def_33}
The population fitness for the modified distributed quasispecies models described by the integro-differential
equation~{\rm (\ref{Eq_40})} is given by
\begin{equation}
\begin{aligned}
& \tilde{f}[v] \:\: = \:\: \begin{cases}
0, & \int_{\Omega} v(x) \, \mathrm{d} x \: = \: 0, \\
\frac{f[v]}{\int_{\Omega} d(x) \, v(x) \, \mathrm{d} x} \,\, = \,\,
\frac{\int_{\Omega} m(x) \, v(x) \, \mathrm{d} x}{\int_{\Omega} d(x) \, v(x) \, \mathrm{d} x} \, , &
\int_{\Omega} v(x) \, \mathrm{d} x \: > \: 0,
\end{cases} \\
& \forall v \in V_+.
\end{aligned}
\end{equation}  \label{Eq_54}
\end{definition}

The relation~(\ref{Eq_53}) therefore means that
\begin{equation}
\lambda^* \, = \, \tilde{f} [u^*].  \label{Eq_55}
\end{equation}

\begin{theorem}  \label{Thm_34}
Under Assumptions~{\rm \ref{Ass_25}, \ref{Ass_26}, \ref{Ass_29},} and {\rm \ref{Ass_31},} the following properties hold{\rm :}
\begin{itemize}
\item  the operator $ \: \mathcal{G}_1 + \mathcal{B}_1 \, = \, \mathcal{D}^{-1} (\mathcal{G} + \mathcal{B}) \: $
has a dominant eigenvalue~$ \lambda^* ${\rm ,} which is real and greater than the real part of any other spectral point of
$ \mathcal{G}_1 + \mathcal{B}_1 ${\rm ;}
\item  the eigenvalue~$ \lambda^* $ is simple and admits a positive eigenfunction{\rm ;}
\item  there are no other eigenvalues of $ \mathcal{G}_1 + \mathcal{B}_1 $ admitting nonnegative eigenfunctions.
\end{itemize}
\end{theorem}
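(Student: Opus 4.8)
The plan is to recognize Theorem~\ref{Thm_34} as the Banach-space Perron--Frobenius counterpart of Theorem~\ref{Thm_12} and to reduce it to B\"urger's theory. Write $\mathcal{A} \, = \, \mathcal{G}_1 + \mathcal{B}_1 \, = \, \mathcal{D}^{-1}(\mathcal{G} + \mathcal{B})$ on $V = L^1(\Omega;\mathbb{R})$. Under Assumptions~\ref{Ass_25} and \ref{Ass_26} this is a bounded operator, namely the sum of the positive compact integral operator~$\mathcal{G}_1$ from~(\ref{Eq_49}) and the bounded multiplication operator $\mathcal{B}_1$ from~(\ref{Eq_51}), which vanishes in the Eigen case~(\ref{Eq_37}) and equals multiplication by $(m(x)-\mu(x))/d(x)$ in the Crow--Kimura case~(\ref{Eq_38}). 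First I would record that $\mathcal{A}$ generates a positive, uniformly continuous semigroup $(e^{t\mathcal{A}})_{t\geq 0}$ on $V$ (bounded positive perturbation of the positive multiplication semigroup generated by $\mathcal{B}_1$), and that Item~2 of Assumption~\ref{Ass_31} makes $\mathcal{G}_1$, hence $\mathcal{A}$ and the semigroup, irreducible; this is exactly Remark~\ref{Rem_32} together with \cite[Proposition~3.1]{Burger1988a}.

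Next I would treat the two cases separately. In the Eigen case $\mathcal{B}_1 = 0$, so $\mathcal{A} = \mathcal{G}_1$ is a positive compact irreducible operator on $V$; the Krein--Rutman theorem then yields that the spectral radius $\lambda^* = r(\mathcal{G}_1)$ is a simple eigenvalue strictly dominating the modulus of every other spectral point, that it admits an almost everywhere positive eigenfunction, and that it is the only eigenvalue possessing a nonnegative eigenfunction; condition~(\ref{Eq_47}) guarantees $\lambda^* > 0$ by bounding $\mathcal{G}_1$ below on the block over $\hat\Omega$. In the Crow--Kimura case $\mathcal{A}$ is no longer compact: its essential spectrum is governed by $\mathcal{B}_1$ and is contained in $(-\infty,\eta]$, where $\eta = \mathrm{ess\,sup}_{\Omega}\,(m-\mu)/d$. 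The key step is to produce a principal eigenvalue strictly above~$\eta$. For $\lambda > \eta$ the operator $\mathcal{T}_\lambda = (\lambda\,\mathrm{Id} - \mathcal{B}_1)^{-1}\mathcal{G}_1$ is positive, compact and irreducible, its spectral radius $\rho(\lambda) = r(\mathcal{T}_\lambda)$ is finite, continuous and strictly decreasing on $(\eta,\infty)$ with $\rho(\lambda)\to 0$ as $\lambda\to\infty$, while the lower bound obtained by restricting to $\hat\Omega$ and using~(\ref{Eq_47}) shows $\liminf_{\lambda\downarrow\eta}\rho(\lambda) \geq \hat c\int_{\hat\Omega}\bigl(\eta - (m(x)-\mu(x))/d(x)\bigr)^{-1}\mathrm{d}x > 1$ by~(\ref{Eq_48}). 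Hence there is a unique $\lambda^* \in (\eta,\infty)$ with $\rho(\lambda^*) = 1$, i.e.\ an eigenvalue of $\mathcal{A}$ lying above its essential spectrum; applying Krein--Rutman to the compact positive irreducible operator $\mathcal{T}_{\lambda^*}$ (whose Perron eigenvalue~$1$ is simple with positive eigenfunction $u^*$, so $\mathcal{A}u^* = \lambda^* u^*$) transfers all Perron--Frobenius conclusions to $\mathcal{A}$.

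Finally I would assemble the three asserted properties. Dominance of $\lambda^*$ over the real part of every other spectral point follows from the Krein--Rutman conclusion for $\mathcal{G}_1$ in the Eigen case and, in the Crow--Kimura case, from the dominance of the Perron eigenvalue of $\mathcal{T}_{\lambda^*}$ together with the fact that the essential spectrum of $\mathcal{A}$ lies in $(-\infty,\eta]$ with $\eta < \lambda^*$. Simplicity and the positivity of the eigenfunction are again part of the Krein--Rutman statement for the relevant compact positive irreducible operator. For the last bullet, if $\mathcal{A}w = \nu w$ with $w \in V_+\setminus\{0\}$, then $w$ is a nonnegative eigenfunction of the irreducible integral operator, hence almost everywhere positive; in the Eigen case this forces $\nu = \lambda^*$ directly by Krein--Rutman, and in the Crow--Kimura case positivity of $w$ forces $b/d \leq \nu$ a.e., hence $\nu \geq \eta$, while $\nu = \eta$ is excluded by the same lower-bound computation that underlies~(\ref{Eq_48}) (it would yield $1 \geq \hat c\int_{\hat\Omega}(\eta - (m-\mu)/d)^{-1}\mathrm{d}x$ or $w\notin L^1$), so $w$ is a nonnegative eigenfunction of $\mathcal{T}_\nu$ with eigenvalue $\rho(\nu) = 1$ and the strict monotonicity of $\rho$ gives $\nu = \lambda^*$. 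Most compactly, one may simply verify that Assumptions~\ref{Ass_25}, \ref{Ass_26}, \ref{Ass_29}, and \ref{Ass_31} imply the standing hypotheses of \cite[\S 3]{Burger1988a} and \cite[\S IV.3]{Burger2000} and quote those results. The main obstacle is the Crow--Kimura case, where $\mathcal{A}$ is not compact: one must separate the point spectrum above~$\eta$ from the essential spectrum in $(-\infty,\eta]$, and the technical heart is precisely the monotone ``characteristic-equation'' argument of the previous paragraph that locates $\lambda^*$ by means of~(\ref{Eq_47}) and~(\ref{Eq_48}); the Eigen case, by contrast, is an immediate application of Krein--Rutman once irreducibility (Remark~\ref{Rem_32}) is in hand.
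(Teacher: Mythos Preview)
Your proposal is correct and follows essentially the same route as the paper: both reduce Theorem~\ref{Thm_34} to B\"urger's Perron--Frobenius theory in \cite[\S 3]{Burger1988a} and \cite[\S IV.3]{Burger2000}, with Items~2--4 of Assumption~\ref{Ass_31} supplying irreducibility and the quantitative lower bound that pushes the dominant eigenvalue above the essential spectrum in the Crow--Kimura case. The only cosmetic difference is that the paper first recasts the eigenvalue problem as the steady-state problem for a normalized integro-differential equation and shifts $\mathcal{B}_1$ by $\mathrm{ess\,sup}_{\Omega}\,b/d$ to land exactly in B\"urger's setting, whereas you unpack B\"urger's characteristic-equation argument (the map $\lambda\mapsto r(\mathcal{T}_\lambda)$) directly; these are the same argument viewed at different levels of detail.
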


Theorem~\ref{Thm_34} is proved in Appendix.

\begin{theorem}  \label{Thm_35}
Let Assumptions~{\rm \ref{Ass_25}, \ref{Ass_26}, \ref{Ass_29},} and {\rm \ref{Ass_31}} hold. Then a steady
state~$ u^* \in V_+ $ of the integro-differential equation~{\rm (\ref{Eq_40})} with
$ \: s^* \: = \: \int_{\Omega} u^*(x) \, \mathrm{d} x \: > \: 0 \: $ exists if and only if the dominant
eigenvalue~$ \lambda^* $ of the operator~$ \mathcal{G}_1 + \mathcal{B}_1 $ satisfies
\begin{equation}
\lambda^* \, > \, 0, \quad \frac{1}{\lambda^*} \: \in \: (l_{\varphi}, \varphi(0)).  \label{Eq_56}
\end{equation}
Moreover{\rm ,} if {\rm (\ref{Eq_56})} holds{\rm ,} this steady state is uniquely determined as the positive
eigenfunction of $ \mathcal{G}_1 + \mathcal{B}_1 $ corresponding to $ \lambda^* $ and normalized so that
\begin{equation}
s^* \,\, = \,\, \int\limits_{\Omega} u^*(x) \, \mathrm{d} x \,\, = \,\, \varphi^{-1} \left( \frac{1}{\lambda^*} \right).
\label{Eq_57}
\end{equation}
\end{theorem}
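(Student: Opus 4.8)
The plan is to mimic the proof of Theorem~\ref{Thm_13}, with the finite-dimensional Perron--Frobenius statement (Theorem~\ref{Thm_12}) replaced by its operator-theoretic counterpart Theorem~\ref{Thm_34}, and with the algebraic manipulations (\ref{Eq_17})--(\ref{Eq_18}) replaced by their distributed versions (\ref{Eq_50})--(\ref{Eq_52}). First I would record the two preliminary facts that make the reduction possible: by Assumption~\ref{Ass_29} the multiplication operator $\mathcal{D}$ is boundedly invertible on $V$ with $\mathcal{D}^{-1}[v](x) = v(x)/d(x)$; and by Item~1 of Assumption~\ref{Ass_31} (which imports the conditions of Assumption~\ref{Ass_6}) the function $\varphi$ is positive and strictly decreasing on $[0,+\infty)$, so that $\varphi(s) \in (l_\varphi, \varphi(0)]$ for $s \geqslant 0$ with $\varphi(s) < \varphi(0)$ whenever $s > 0$, and the strictly decreasing inverse $\varphi^{-1} \colon (l_\varphi, \varphi(0)] \to [0,+\infty)$ of Remark~\ref{Rem_7} exists, with $\varphi^{-1}(\rho) > 0$ whenever $\rho < \varphi(0)$.

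\emph{Necessity.} Suppose $u^* \in V_+$ is a steady state of (\ref{Eq_40}) with $s^* = \psi[u^*] > 0$, i.e. $\mathcal{F}[u^*] = 0$. Applying $\mathcal{D}^{-1}$ and using $\varphi(s^*) > 0$ turns this into the eigenvalue equation (\ref{Eq_50}), so $u^*$ is a nonzero nonnegative eigenfunction of $\mathcal{G}_1 + \mathcal{B}_1$ with eigenvalue $1/\varphi(s^*) > 0$. By the third item of Theorem~\ref{Thm_34}, the only eigenvalue of $\mathcal{G}_1 + \mathcal{B}_1$ possessing a nonnegative eigenfunction is the dominant one $\lambda^*$; hence $1/\varphi(s^*) = \lambda^*$. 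Since $s^* > 0$ and $\varphi$ is strictly decreasing with limit $l_\varphi$ at infinity, $1/\lambda^* = \varphi(s^*) \in (l_\varphi, \varphi(0))$, which is exactly (\ref{Eq_56}). Moreover, because $\lambda^*$ is simple (second item of Theorem~\ref{Thm_34}) its eigenspace is one-dimensional and spanned by the positive eigenfunction, so the nonzero nonnegative $u^*$ is a positive multiple of it and is therefore positive almost everywhere; applying $\varphi^{-1}$ to $\varphi(s^*) = 1/\lambda^*$ gives the normalization (\ref{Eq_57}), which fixes that multiple and hence determines $u^*$ uniquely.

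\emph{Sufficiency.} Conversely, assume (\ref{Eq_56}). By Theorem~\ref{Thm_34} the operator $\mathcal{G}_1 + \mathcal{B}_1$ has the real simple dominant eigenvalue $\lambda^* > 0$ with a positive eigenfunction $\tilde u$, unique up to a positive scalar, and $\psi[\tilde u] > 0$. Since $1/\lambda^* \in (l_\varphi, \varphi(0))$, put $s^* = \varphi^{-1}(1/\lambda^*) > 0$ and $u^* = (s^*/\psi[\tilde u])\,\tilde u \in V_+$. Then $\psi[u^*] = s^*$, $u^*$ is positive almost everywhere, and $(\mathcal{G}_1 + \mathcal{B}_1)[u^*] = \lambda^* u^* = u^*/\varphi(s^*)$; applying $\mathcal{D}$ and rearranging yields $\varphi(\psi[u^*])\,(\mathcal{G} + \mathcal{B})[u^*] - \mathcal{D}[u^*] = 0$, i.e. $\mathcal{F}[u^*] = 0$, so $u^*$ is a steady state of (\ref{Eq_40}) with $\int_\Omega u^*(x)\,\mathrm{d}x = s^* > 0$, and (\ref{Eq_57}) holds by construction. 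Uniqueness follows from the necessity argument above.

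I expect no deep obstacle here: all of the hard analysis --- existence, reality, simplicity, and positivity of the dominant eigenfunction of $\mathcal{G}_1 + \mathcal{B}_1$ together with the absence of competing nonnegative eigenfunctions --- is already concentrated in Theorem~\ref{Thm_34}, whose own proof (deferred to the Appendix) is where Assumption~\ref{Ass_31} does its work. The only points demanding care are bookkeeping ones: confirming that $1/\varphi(s^*)$ is legitimately an eigenvalue (hence the need for $\varphi > 0$ on $[0,+\infty)$), noting that a nonzero nonnegative eigenfunction associated with a simple eigenvalue is automatically a positive multiple of the Perron eigenfunction, and checking that the range condition $1/\lambda^* \in (l_\varphi, \varphi(0))$ is precisely what makes $\varphi^{-1}(1/\lambda^*)$ well defined and strictly positive, so that the normalization (\ref{Eq_57}) is consistent and pins down the unique steady state.
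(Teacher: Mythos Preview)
Your proposal is correct and follows essentially the same approach as the paper, whose proof merely states that it suffices to use Theorem~\ref{Thm_34} together with the relations~(\ref{Eq_50}) and~(\ref{Eq_52}); you have simply written out in full the necessity, sufficiency, and uniqueness arguments that those references encode.
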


\begin{proof}
It suffices to use Theorem~\ref{Thm_34} as well as the relations~(\ref{Eq_50}) and (\ref{Eq_52}).
\end{proof}

Note the similarity between Theorems~\ref{Thm_34}, \ref{Thm_35} and \ref{Thm_12}, \ref{Thm_13}, respectively.

\begin{remark}  \rm  \label{Rem_36}
If $ \varphi $ is given by (\ref{Eq_14}), the relations~(\ref{Eq_56}) and (\ref{Eq_57}) transform into
\begin{equation}
\lambda^* \, > \, 1, \quad s^* \,\, = \,\, \int\limits_{\Omega} u^*(x) \, \mathrm{d} x \,\, = \,\,
\frac{1}{\gamma} \, \ln \, \lambda^*  \label{Eq_58}
\end{equation}
(recall Remark~\ref{Rem_14}).  \qed
\end{remark}

Next, let us provide sufficient conditions for stability and instability of the nontrivial steady state~$ u^* $ of
(\ref{Eq_40}). By using the general results mentioned in \cite[\S VII.2.4]{DaleckiiKrein1974}, one obtains
the following theorem.

\begin{theorem}  \label{Thm_37}
Let Assumptions~{\rm \ref{Ass_25}, \ref{Ass_26}, \ref{Ass_29},} and {\rm \ref{Ass_31}} hold{\rm ,} and let
$ u^* \in V_+ $ satisfy $ \: s^* \: = \: \int_{\Omega} u^*(x) \, \mathrm{d} x \: > \: 0 \: $ and be a steady
state of the integro-differential equation~{\rm (\ref{Eq_40})}. Suppose also that the growth saturation
function~$ \varphi $ is twice differentiable at the point~$ s^* $. Consider the Jacobian operator~$ \mathrm{D} \mathcal{F} $
determined by {\rm (\ref{Eq_44})} and {\rm (\ref{Eq_45})}. The auxiliary operator~$ \mathcal{E}_{u^*} $ for
the steady state~$ u^* $ is simplified to
\begin{equation}
\mathcal{E}_{u^*}[h] \:\, = \:\, \frac{\varphi'(\psi[u^*])}{\varphi(\psi[u^*])} \: \psi[h] \: \mathcal{D} [u^*]
\quad \forall h \in V.  \label{Eq_59}
\end{equation}
If the real parts of all spectral points of $ \mathrm{D} \mathcal{F} [u^*] $ are negative{\rm ,} then the steady
state~$ u^* $ is asymptotically stable. If at least one spectral point of $ \mathrm{D} \mathcal{F} [u^*] $ has positive
real part{\rm ,} then $ u^* $ is unstable.
\end{theorem}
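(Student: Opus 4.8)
The plan is to obtain Theorem~\ref{Thm_37} as an application of the principle of stability and instability in the first approximation for autonomous differential equations in a Banach space, in the form collected in \cite[\S VII.2.4]{DaleckiiKrein1974}, applied to the integro-differential equation~(\ref{Eq_40}) rewritten abstractly as $\dot u(t) = \mathcal{F}[u(t)]$ with state space $V = L^1(\Omega;\mathbb{R})$ and $\mathcal{F}$ as in~(\ref{Eq_43}). The ingredients those results require are: Fr\'echet differentiability of $\mathcal{F}$ with an explicit derivative; that the linearization $\mathrm{D}\mathcal{F}[u^*]$ be a generator (here even a bounded operator) whose semigroup is controlled by its spectrum; and knowledge of the location of $\sigma(\mathrm{D}\mathcal{F}[u^*])$ relative to the imaginary axis.

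First I would record the announced simplification of $\mathcal{E}_{u^*}$. By~(\ref{Eq_50}), a steady state $u^*$ with $s^* = \int_{\Omega} u^*(x)\,\mathrm{d}x > 0$ satisfies $(\mathcal{G}+\mathcal{B})[u^*] = \varphi(s^*)^{-1}\,\mathcal{D}[u^*]$ (equivalently $\mathcal{G}_1[u^*] + \mathcal{B}_1[u^*] = \varphi(s^*)^{-1} u^*$). Substituting $v = u^*$ into the general formula~(\ref{Eq_45}) for $\mathcal{E}_v$ and using $\psi[u^*] = s^*$ gives $\mathcal{E}_{u^*}[h] = \varphi'(s^*)\,\psi[h]\,(\mathcal{G}[u^*]+\mathcal{B}[u^*]) = \tfrac{\varphi'(s^*)}{\varphi(s^*)}\,\psi[h]\,\mathcal{D}[u^*]$ for all $h \in V$, which is exactly~(\ref{Eq_59}); consequently~(\ref{Eq_44}) becomes $\mathrm{D}\mathcal{F}[u^*] = \varphi(s^*)(\mathcal{G}+\mathcal{B}) - \mathcal{D} + \mathcal{E}_{u^*}$. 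Next I would observe that $\mathrm{D}\mathcal{F}[u^*] \in \mathcal{L}(V,V)$: under Assumptions~\ref{Ass_25} and~\ref{Ass_26}, $\mathcal{G}$ is an integral operator with bounded kernel over a bounded domain (hence bounded, indeed compact, on $L^1$), $\mathcal{B}$ and $\mathcal{D}$ are multiplication operators by bounded functions, and $\mathcal{E}_{u^*}$ is the rank-one operator $h \mapsto \tfrac{\varphi'(s^*)}{\varphi(s^*)}\psi[h]\,\mathcal{D}[u^*]$ with $|\psi[h]| \le \|h\|_V$. Theorem~\ref{Thm_27} supplies the differentiability of $\mathcal{F}$, and $v \mapsto \mathrm{D}\mathcal{F}[v]$ is continuous, so $\mathcal{F}$ is of class $C^1$ near $u^*$.

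Since $\mathrm{D}\mathcal{F}[u^*]$ is bounded, it generates the uniformly continuous semigroup $t \mapsto \exp\{t\,\mathrm{D}\mathcal{F}[u^*]\}$, for which the spectral mapping theorem holds, so the exponential growth bound of this semigroup equals $\omega^* := \sup\{\Re\lambda : \lambda \in \sigma(\mathrm{D}\mathcal{F}[u^*])\}$. If $\omega^* < 0$, the linear semigroup decays exponentially, and the $C^1$ expansion $\mathcal{F}[u] = \mathcal{F}[u^*] + \mathrm{D}\mathcal{F}[u^*](u-u^*) + o(\|u-u^*\|_V)$ puts us in the hypotheses of the stability theorem of \cite[\S VII.2.4]{DaleckiiKrein1974}, giving local asymptotic stability of $u^*$; if some $\lambda \in \sigma(\mathrm{D}\mathcal{F}[u^*])$ has $\Re\lambda > 0$, the corresponding instability theorem of the same section applies and yields instability of $u^*$.

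The step I expect to demand the most care is justifying that ``$\sigma(\mathrm{D}\mathcal{F}[u^*])$ in the open left half-plane'' genuinely forces exponential decay of the linearized flow — and, symmetrically, that one spectral point with positive real part suffices for the instability construction — since on $L^1$ such a correspondence can fail for general generators. It is benign here precisely because $\mathrm{D}\mathcal{F}[u^*]$ is bounded: writing it as the multiplication operator by $\varphi(s^*)b(\cdot) - d(\cdot)$ plus the compact perturbation $\varphi(s^*)\mathcal{G} + \mathcal{E}_{u^*}$, invariance of the essential spectrum under compact perturbations identifies $\sigma_{\mathrm{ess}}(\mathrm{D}\mathcal{F}[u^*])$ with the essential range of $\varphi(s^*)b(\cdot) - d(\cdot)$ (which in the Eigen case~(\ref{Eq_37}) equals the essential range of $-d(\cdot)$, contained in $(-\infty,-d_{\mathrm{low}}]$ by Assumption~\ref{Ass_29}), so outside any left half-plane the spectrum consists of isolated eigenvalues of finite multiplicity; the standard spectral-projection reduction to a finite-dimensional subspace underlying both assertions of \cite[\S VII.2.4]{DaleckiiKrein1974} then applies verbatim, and the spectral mapping theorem for the uniformly continuous semigroup closes the argument.
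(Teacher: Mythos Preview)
Your proposal is correct and follows exactly the route the paper indicates: the paper does not give a detailed proof but simply states that Theorem~\ref{Thm_37} follows from the general first-approximation stability and instability results in \cite[\S VII.2.4]{DaleckiiKrein1974}, and you have spelled out precisely how to invoke those results (deriving~(\ref{Eq_59}) from the steady-state relation, checking boundedness of $\mathrm{D}\mathcal{F}[u^*]$, and using the spectral mapping theorem for the uniformly continuous semigroup). Your final paragraph on the essential spectrum goes beyond what is needed for the conditional statement of the theorem, but it is a sound and useful elaboration.
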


\begin{remark}  \rm  \label{Rem_38}
If $ \varphi $ is selected in line with (\ref{Eq_14}), the relation~(\ref{Eq_59}) transforms into
$$
\mathcal{E}_{u^*}[h] \,\, = \,\, -\gamma \: \psi[h] \: \mathcal{D} [u^*] \quad \forall h \in V.
$$
\qed
\end{remark}

\begin{remark}  \rm  \label{Rem_39}
Recall the boundedness of the total population size along state trajectories as mentioned in Theorem~\ref{Thm_30}.
If a state trajectory is not attracted by the positive steady state, the zero steady state may be approached.
The latter case means the extinction of the whole considered quasispecies population and is possible when
$ \: \mathrm{ess} \, \inf_{x \, \in \, \Omega} \, d(x) \: $ is sufficiently large (see also Remark~\ref{Rem_16}).  \qed
\end{remark}

The challenging practical problems of obtaining the dominant eigenvalue~$ \lambda^* $ and the positive eigenfunction~$ u^* $
of the operator~$ \mathcal{G}_1 + \mathcal{B}_1 $ and verifying stability or instability of the steady state~$ u^* $ via
Theorem~\ref{Thm_37} (so that one has to characterize the spectrum of $ \mathrm{D} \mathcal{F} [u^*] $) are possible
subjects of future research. Their principal issue is that eigenvalue problems for noncompact and non-self-adjoint compact
operators are involved. In general, the compact integral operator~$ \mathcal{G}_1 $ is non-self-adjoint (due to its
nonsymmetric kernel), the operator~$ \mathcal{G}_1 + \mathcal{B}_1 $ is compact in the Eigen case~(\ref{Eq_37}) (when
$ \mathcal{B}_1 $ vanishes) but noncompact in the Crow--Kimura case~(\ref{Eq_38}), and $ \mathrm{D} \mathcal{F} [u^*] $ is
noncompact in both of these cases (since $ \mathcal{D} $ is noncompact). After an efficient approach to treating such
problems is developed (at least for a particular nontrivial subclass of the distributed quasispecies models), it will be
reasonable to investigate the infinite-dimensional problems of maximizing the steady-state fitness over fitness landscape
functions that satisfy appropriate constraints.

\section{Conclusion}
\label{Sec_5}

In this work, we developed a general approach to the construction of open quasispecies models incorporating growth and mortality
characteristics. We proposed open modifications of the Eigen and Crow--Kimura quasispecies models in both ODE-based and distributed
(continuum-of-alleles) settings. The distributed formulations built on integro-differential equations and were motivated by
the complexity of numerical analysis of quasispecies systems with large numbers of ODEs, as well as by the fact that the actual
number of genotypes or alleles involved in a studied process could indeed be extremely large. Essential properties of the open
quasispecies models, regarding in particular steady states, were investigated.

We also explained a motivation for steady-state fitness maximization problems and studied them in case of ODE-based quasispecies
dynamics. It was in particular established that, for the ODE-based Crow--Kimura models (both open and closed), such a problem
leads to convex optimization and allows for an efficient numerical implementation. For the Eigen models, it remains an open
question whether a similar reduction to convex optimization can in general be carried out.

Another open problem for our models is verification of the following natural conjecture under the already adopted and possibly
some additional assumptions: either (i) the trivial zero steady state is asymptotically stable globally in the nonnegative
subspace while the nontrivial nonnegative steady state does not exist, or (ii) the trivial steady state is unstable while
the nontrivial steady state exists and is asymptotically stable globally in the nonnegative subspace with excluded zero.

As was demonstrated in our numerical simulation results, the ODE-based open quasispecies models enable at least the following
two nontrivial bifurcation scenarios with the increase of a specific parameter (such as a mutation rate or a death rate):
(i) the error threshold is observed when the population does not extinct, or (ii) the extinction already takes place prior to
the nominally interpreted error threshold.

Our mathematical constructions for the distributed open quasispecies models serve just as the first step in this research
direction. Further developments require the design of an efficient practical approach to treating the related eigenvalue
problems and verifying stability or instability of the nontrivial nonnegative steady states. A major difficulty in these
problems is that one has to deal with noncompact and non-self-adjoint compact operators. After the corresponding challenges
are overcome at least for a particular reasonable subclass of the distributed open quasispecies models, it will be relevant
to consider the infinite-dimensional problem of steady-state fitness maximization.

As was also noted above, a promising research area where our framework may eventually be used is modeling the dynamics of
various quasispecies related to certain pathogens or diseased cells under targeted attacks from therapeutic
agents~\cite{KomarovaWodarz2014,WodarzKomarova2014,SchattlerLedzewicz2015}.


\section*{References}
%

\appendix

\section*{Appendix}

\setcounter{equation}{0}
\renewcommand{\theequation}{A.\arabic{equation}}

\subsection*{Proof of Theorem~{\rm \ref{Thm_2}}}

Consider a solution of (\ref{Eq_10}),~(\ref{Eq_11}) defined on a time interval~$ [0, t_1) $ with
$ \: t_1 \, \in \, [0, +\infty) \, \cup \, \{ +\infty \} $. For all $ t \in [0, t_1) $ and $ i = \overline{1, n} $,
this solution satisfies
$$
\dot{u}_i(t) \:\: = \:\: \varphi \left( \sum_{j = 1}^n u_j(t) \right) \, \sum_{j = 1}^n g_{ij} \, u_j(t) \:\, - \:\,
d_i \, u_i(t),
$$
\begin{equation}
\begin{aligned}
\frac{\mathrm{d}}{\mathrm{d} t} \left( u_i(t) \: \exp \left\{ -g_{ii} \, \int\limits_0^t
\varphi \left( \sum_{j = 1}^n u_j(\tau) \right) \, \mathrm{d} \tau \,\, + \,\, d_i t \right\} \right) & \\
= \:\: \exp \left\{ -g_{ii} \, \int\limits_0^t  \varphi \left( \sum_{j = 1}^n u_j(\tau) \right) \, \mathrm{d} \tau \,\, + \,\,
d_i t \right\} & \\
\cdot \:\, \varphi \left( \sum_{j = 1}^n u_j(t) \right) \, \sum_{\substack{j = 1, \\ j \neq i \:\,}}^n g_{ij} \, u_j(t). &
\end{aligned}  \label{Eq_A_1}
\end{equation}
From the relations~(\ref{Eq_8}), (\ref{Eq_11}), (\ref{Eq_A_1}) and nonnegativity of $ \varphi $, one concludes that
$$
\frac{\mathrm{d}}{\mathrm{d} t} \left( u_i(t) \: \exp \left\{ -g_{ii} \, \int\limits_0^t
\varphi \left( \sum_{j = 1}^n u_j(\tau) \right) \, \mathrm{d} \tau \,\, + \,\, d_i t \right\} \right) \:\: \geqslant \:\: 0
$$
and $ u_i(t) \geqslant 0 $ for all $ t \in [0, t_1) $ and $ i = \overline{1, n} $, which leads to the first statement of
the theorem.

In order to verify the second statement, it remains to note that the properties of $ \varphi $ given in
Assumption~\ref{Ass_1} imply the boundedness of the elements of the Jacobian matrix~(\ref{Eq_13}) on
$ [0, +\infty)^{n \times 1} $.  \qed

\subsection*{Proof of Theorem~{\rm \ref{Thm_5}}}

Consider a solution of (\ref{Eq_10}),~(\ref{Eq_11}) and the function~(\ref{Eq_15}). By virtue of Theorem~\ref{Thm_2}, one has
\begin{equation}
u_i(t) \, \geqslant \, 0, \quad i = \overline{1, n}, \quad s(t) \, \geqslant \, 0 \qquad \forall t \geqslant 0.
\label{Eq_A_2}
\end{equation}
Denote $ \: m_{\max} \: = \: \max_{i \, = \, \overline{1, n}} \, m_i $. Due to the relations~(\ref{Eq_7}), (\ref{Eq_10}),
(\ref{Eq_A_2}), and nonnegativity of $ \varphi $, one obtains
$$
\begin{aligned}
& \dot{s}(t) \:\, = \:\, \varphi(s(t)) \, \sum_{j = 1}^n m_j \, u_j(t) \,\, - \,\, \sum_{i = 1}^n d_i \, u_i(t) \\
& \qquad
\leqslant \:\, m_{\max} \, s(t) \, \varphi(s(t)) \,\, - \,\, d_{\min} \, s(t) \\
& \forall t \geqslant 0.
\end{aligned}
$$
Furthermore, Assumption~\ref{Ass_1} yields that
$$
0 \: \leqslant \: c \: = \: \sup_{\xi \, \in \, [0, +\infty)} \, \{ \xi \, \varphi(\xi) \} \: < \: +\infty.
$$
Hence,
\begin{equation}
\dot{s}(t) \,\, \leqslant \,\, c \, | m_{\max} | \: - \: d_{\min} \, s(t) \quad \forall t \geqslant 0.  \label{Eq_A_3}
\end{equation}
From the relations~(\ref{Eq_A_2}), (\ref{Eq_A_3}) and Assumption~\ref{Ass_4}, one concludes that $ s $ is bounded on
the whole time interval~$ [0, +\infty) $.  \qed

\subsection*{Proof of Theorem~{\rm \ref{Thm_27}}}

Let $ v \in V $ and $ h \in V $. Note that
\begin{equation}
\begin{aligned}
\varphi(\psi[v + h]) \:\, & = \:\, \varphi(\psi[v] \, + \, \psi[h]) \\
& = \:\, \varphi(\psi[v]) \,\, + \,\, \varphi'(\psi[v]) \, \psi[h] \,\, + \,\, o_1(h),
\end{aligned}  \label{Eq_A_4}
\end{equation}
where $ \, o_1 \colon V \to V \, $ is such that
$ \: \lim_{\| w \|_V \, \searrow \, 0} \, (\| o_1(w) \|_V \, / \, \| w \|_V) \: = \: 0 $. After substituting
(\ref{Eq_A_4}) into
$$
\begin{aligned}
\mathcal{F}[v + h] \:\: & = \:\: \varphi(\psi[v + h]) \: (\mathcal{G}[v + h] \: + \: \mathcal{B}[v + h]) \:\, - \:\,
\mathcal{D}[v + h] \\
& = \:\: \varphi(\psi[v + h]) \: (\mathcal{G}[v] \: + \: \mathcal{B}[v] \: + \: \mathcal{G}[h] \: + \: \mathcal{B}[h])
\:\, - \:\, \mathcal{D}[v] \:\, - \:\, \mathcal{D}[h],
\end{aligned}
$$
one obtains
$$
\begin{aligned}
\mathcal{F}[v + h] \:\: = \: & \,\, \mathcal{F}[v] \:\, + \:\, \varphi(\psi[v]) \: (\mathcal{G}[h] \: + \: \mathcal{B}[h]) \\
& + \:\, \varphi'(\psi[v]) \: \psi[h] \: (\mathcal{G}[v] \: + \: \mathcal{B}[v]) \:\, - \:\, \mathcal{D}[h] \:\, + \:\, o_2(h)
\end{aligned}
$$
with $ \, o_2 \colon V \to V \, $ satisfying
$ \: \lim_{\| w \|_V \, \searrow \, 0} \, (\| o_2(w) \|_V \, / \, \| w \|_V) \: = \: 0 $. For completing
the proof, it remains to use the definition of the Fr\'echet derivative.  \qed

\subsection*{Proof of Theorem~{\rm \ref{Thm_28}}}

In order to obtain the first statement of the theorem, note that the equation~(\ref{Eq_40}) can be transformed into
$$
\begin{aligned}
\frac{\partial}{\partial t} \left( u(x, t) \: \exp \left\{ -b(x) \int\limits_0^t
\varphi \left( \int\limits_{\Omega} u(y, \tau) \, \mathrm{d} y \right) \, \mathrm{d} \tau \:\, + \:\,
d(x) \, t \right\} \right) & \\
= \:\: \exp \left\{ -b(x) \int\limits_0^t \varphi \left( \int\limits_{\Omega} u(y, \tau) \, \mathrm{d} y \right) \,
\mathrm{d} \tau \:\, + \:\, d(x) \, t \right\} & \\
\cdot \:\, \varphi \left( \int\limits_{\Omega} u(y, t) \, \mathrm{d} y \right) \:
\int\limits_{\Omega} G(x, y) \, u(y, t) \, \mathrm{d} y &
\end{aligned}
$$
for all $ x \in \Omega $ and for all $ t $ on a time interval~$ [0, t_1) $ where a considered solution exists
(this transformation still makes sense when (\ref{Eq_40}) is understood as a differential equation with
the state space~$ V $), and that the functions~$ \varphi, G $ are nonnegative (due to Assumption~\ref{Ass_26}).

The second statement is established with the help of the following arguments:
\begin{list}{\rm \arabic{count})}%
{\usecounter{count}}
\item  $ V_+ $ is positively invariant with respect to the dynamical system~(\ref{Eq_40}) as verified above,
and it is not difficult to prove that $ V_+ $ is also a closed convex subset of $ \: V \, = \, L^1(\Omega; \mathbb{R}) $;
\item  the boundedness of the linear operators $ \mathcal{G}, \mathcal{B}, \mathcal{D} $ (ensured by
Assumptions~\ref{Ass_25}, \ref{Ass_26}) and the adopted properties of $ \varphi $ (see Assumption~\ref{Ass_1}
mentioned in Assumption~\ref{Ass_26}) imply the uniform boundedness of
$ \, \| \mathrm{D} \mathcal{F} [v] \|_{\mathcal{L}(V, V)} \, $ for $ v \in V_+ $;
\item  Item~2 and \cite[Remark~I.9.2]{DaleckiiKrein1974} yield that the right-hand side
operator~$ \mathcal{F} $ is Lipschitz continuous on $ V_+ $;
\item  due to Items~1 and 3, the reasonings in the proof of \cite[Theorem~VII.1.2]{DaleckiiKrein1974}
(relying on the contraction principle in \cite[Theorem~I.9.1]{DaleckiiKrein1974}) can be applied in order to obtain
the sought-after existence result.
\end{list}
It has to be emphasized that the choice of the state space $ \: V \, = \, L^1(\Omega; \mathbb{R}) \: $
plays a crucial role for deriving the property in Item~2 (from (\ref{Eq_43}), one can see that
$ \, \psi[v] \, = \, \| v \|_V \, $  for $ v \in V_+ $).  \qed

\subsection*{Proof of Theorem~{\rm \ref{Thm_34}}}

One in fact arrives at the problem of finding a real dominant eigenvalue of $ \mathcal{G}_1 + \mathcal{B}_1 $ with
a nonnegative eigenfunction when investigating a nontrivial steady state of the integro-differential equation
$$
\begin{aligned}
& \frac{\partial p(x, t)}{\partial t} \:\: = \:\: \mathcal{G}_1 [p(\cdot, t)] (x) \:\, + \:\,
\mathcal{B}_1 [p(\cdot, t)] (x) \\
& \qquad\qquad\quad \:\:
- \:\, p(x, t) \: \int\limits_{\Omega} (\mathcal{G}_1 [p(\cdot, t)] (y) \,\, + \,\,
\mathcal{B}_1 [p(\cdot, t)] (y)) \, \mathrm{d} y, \\
& x \in \Omega, \quad t \geqslant 0,
\end{aligned}
$$
with respect to a dynamical probability density~$ p(\cdot, t) $ satisfying $ \: \int_{\Omega} p(x, t) \, \mathrm{d} x \: = \: 1 \: $
for all $ t \geqslant 0 $. This equation can be rewritten as
$$
\begin{aligned}
& \frac{\partial p(x, t)}{\partial t} \:\: = \:\: \mathcal{G}_1 [p(\cdot, t)] (x) \:\, + \:\,
\tilde{\mathcal{B}}_1 [p(\cdot, t)] (x) \\
& \qquad\qquad\quad \:\:
- \:\, p(x, t) \: \int\limits_{\Omega} (\mathcal{G}_1 [p(\cdot, t)] (y) \,\, + \,\,
\tilde{\mathcal{B}}_1 [p(\cdot, t)] (y)) \, \mathrm{d} y, \\
& x \in \Omega, \quad t \geqslant 0,
\end{aligned}
$$
where
$$
\begin{aligned}
& \tilde{\mathcal{B}}_1 [p(\cdot, t)] (x) \:\: = \:\: \mathcal{B}_1 [p(\cdot, t)] (x) \:\, - \:\,
\left( \mathrm{ess} \, \sup_{y \, \in \, \Omega} \, \frac{b(y)}{d(y)} \right) \, p(x, t) \\
& \qquad\qquad\quad \:\:\:\,
= \:\: \left( \frac{b(x)}{d(x)} \,\, - \,\, \mathrm{ess} \, \sup_{y \, \in \, \Omega} \, \frac{b(y)}{d(y)} \right) \,
p(x, t) \\
& \forall x \in \Omega \quad \forall t \geqslant 0
\end{aligned}
$$
(recall that, according to (\ref{Eq_37}) and (\ref{Eq_38}), one has $ \, b(x) = 0 \, $ in the Eigen case and
$ \: b(x) \, = \, m(x) - \mu(x) \: $ in the Crow--Kimura case). Such a type of integro-differential equations was
studied in \cite{Burger1988a}. It then remains to use the results of \cite[\S 3]{Burger1988a} and to take into
account that \cite[Proposition~3.4]{Burger1988a} can be refined as was mentioned in the end of
\cite[\S IV.3]{Burger2000} (Items~3 and 4 of Assumption~\ref{Ass_31} play a significant role for applying that
refinement).  \qed

\end{document}